\tikzstyle{location}=[rectangle, rounded corners, minimum size=12pt, draw=black, fill=blue!10, inner sep=2pt]
\tikzstyle{pta}=[auto, ->, >=stealth']
\tikzstyle{PZG}=[auto, ->, >=stealth']
\tikzstyle{mergingFigure} = [>=stealth', node distance=1.8cm, yscale=.6]
\tikzstyle{location10}=[location, minimum size=10pt]
\tikzstyle{invariant}=[draw=black, dotted, inner sep=1pt, node distance=0] 
\tikzstyle{final}=[fill=green!70,double]
\tikzstyle{urgent}=[dotted, draw=red, very thick, fill=yellow]
\tikzstyle{bad}=[fill=red]
\tikzstyle{pzgstate} = [
\tikzstyle{fillred} = [ fill=red!20 ]
\tikzstyle{fillblue} = [ fill=blue!20 ]
\tikzstyle{fillyellow} = [ fill=yellow!20 ]
\tikzstyle{line} = [ draw,-latex',thick ]
\tikzstyle{highlightarrow} = [
\tikzstyle{urgent}=[fill=yellow, thick, dotted] 
\tikzstyle{private}=[fill=red!50,thick]
\tikzset{onslide/.code args={<#1>#2}{%
		\only<#1>{\pgfkeysalso{#2}} 
}}
\definecolor{coloract}{rgb}{0.50, 0.70, 0.30}
\definecolor{colorclock}{rgb}{0.4, 0.4, 1}
\definecolor{colordisc}{rgb}{1, 0, 1}
\definecolor{colorloc}{rgb}{0.4, 0.4, 0.65}
\definecolor{colorparam}{rgb}{.66, 0.4, 0.0}
\definecolor{colorstate}{rgb}{1, 0.4, 0.0}
\definecolor{loccolor1}{rgb}{1, 0.6, 0.45}
\definecolor{loccolor2}{rgb}{0.45, 1, 0.45}
\definecolor{loccolor3}{rgb}{0.8, 0.8, 1}
\definecolor{loccolor4}{rgb}{1, 0.45, 1}
\definecolor{loccolor5}{rgb}{1, 1, 0.45}
\definecolor{loccolor6}{rgb}{0.45, 1, 1}
\definecolor{loccolor7}{rgb}{0.9, 0.6, 0.2}
\definecolor{loccolor8}{rgb}{0.7, 0.4, 1}
\definecolor{loccolor9}{rgb}{0.5, 1, 0.75}
\definecolor{loccolor10}{rgb}{0.8, 0.7, 0.6}
\definecolor{loccolor11}{rgb}{0.6, 0.7, 0.8}
\definecolor{loccolor12}{rgb}{0.2, 0.5, 0.9}
\definecolor{loccolor13}{rgb}{0.5, 0.9, 0.2}
\definecolor{loccolor14}{rgb}{0.9, 0.2, 0.5}
\definecolor{loccolor15}{rgb}{0.7, 0.7, 0.7}
\definecolor{loccolor16}{rgb}{0.8, 0.8, 0.5}
\newcommand{\styleact}[1]{\ensuremath{\textcolor{coloract}{\mathrm{#1}}}}
\newcommand{\styleclock}[1]{\ensuremath{\textcolor{colorclock}{\mathrm{#1}}}}
\newcommand{\styleloc}[1]{\ensuremath{\textcolor{colorloc}{\mathrm{#1}}}}
\newcommand{\styleparam}[1]{\ensuremath{\textcolor{colorparam}{\mathrm{#1}}}}
\definecolor{darkgreen}{rgb}{0.0, 0.4, 0.08}
\definecolor{lighterblack}{rgb}{.4, .4, .4}
\definecolor{colorparam}{rgb}{1, 0.6, 0.0}
\definecolor{mygreen}{rgb}{0,0.6,0}
\definecolor{mygray}{rgb}{0.5,0.5,0.5}
\definecolor{mymauve}{rgb}{0.58,0,0.82}
\definecolor{gris}{rgb}{0.6,0.6,0.6}
\definecolor{grisfonce}{rgb}{0.2,0.2,0.2}
\definecolor{turquoise}{rgb}{0, 1, 1}
\definecolor{vertfonce}{rgb}{0,0.85,0}
\definecolor{violet}{rgb}{0.8,0,0.8}
\definecolor{grispale}{rgb}{0.9, 0.9, 0.9}
\definecolor{cpale1}{rgb}{1, 0.3, 0.3}
\definecolor{cpale2}{rgb}{0.3, 1, 0.3}
\definecolor{cpale3}{rgb}{0.3, 0.3, 1}
\definecolor{cpale4}{rgb}{1, 0.3, 1}
\definecolor{cpale5}{rgb}{1, 1, 0.3}
\definecolor{cpale6}{rgb}{0.3, 1, 1}
\definecolor{cpale7}{rgb}{0.9, 0.6, 0.2}
\definecolor{cpale8}{rgb}{0.7, 0.4, 1}
\definecolor{cpale9}{rgb}{0.5, 1, 0.75}
\definecolor{cpale10}{rgb}{0.8, 0.7, 0.6}
\definecolor{cpale11}{rgb}{0.6, 0.7, 0.8}
\definecolor{cpale12}{rgb}{0.2, 0.5, 0.9}
\definecolor{cpale13}{rgb}{0.5, 0.9, 0.2}
\definecolor{cpale14}{rgb}{0.9, 0.2, 0.5}
\definecolor{cpale15}{rgb}{0.7, 0.7, 0.7}
\definecolor{cpale16}{rgb}{0.8, 0.8, 0.5}
\definecolor{bleuciel}{rgb}{0.90,0.95,1}
\definecolor{cv1}{rgb}{1, 0, 0}
\definecolor{cv2}{rgb}{0, 1, 0}
\definecolor{cv3}{rgb}{0, 0, 1}
\definecolor{cv4}{rgb}{1, 1, 0}
\definecolor{cv5}{rgb}{1, 0, 1}
\definecolor{cv6}{rgb}{0, 1, 1}
\definecolor{cv7}{rgb}{0.8, 0.6, 0.4}
\definecolor{cv8}{rgb}{0.5, 0.5, 1}
\definecolor{cv9}{rgb}{0.55, 0.75, 0.35}
\definecolor{cv10}{rgb}{1, 0.6, 0.1}
\definecolor{cv11}{rgb}{0.6, 0.7, 0.8}
\definecolor{cv12}{rgb}{0.2, 0.5, 0.9}
\definecolor{cv13}{rgb}{0.5, 0.9, 0.2}
\definecolor{cv14}{rgb}{1, 0.3, 0.5}
\definecolor{cv15}{rgb}{0.7, 0.7, 0.7}
\definecolor{cv16}{rgb}{0.8, 0.8, 0.5}
\definecolor{cvorange}{rgb}{1,.8,0.5}
\definecolor{colortask}{rgb}{0.5, 0.2, 0.9} 
\newcommand{\marginX}{\marginnote{\huge{\quad\textbf{!}\quad}}}
  \newcommand{\mbdj}[1]{\textcolor{purple}{\marginX{}[\textbf{Mikael}: #1]}}
  \newcommand{\bfi}[1]{\textcolor{orange}{\marginX{}[\textbf{Baptiste}: #1
  ]}}
  \newcommand{\lp}[1]{\textcolor{green!70!black}{\marginX{}[\textbf{Laure}: #1 ]}}
  \newcommand{\jvdp}[1]{\textcolor{blue!40}{\marginX{}[\textbf{Jaco}:
  #1 ]}}
  \newcommand{\mbdj}[1]{}
  \newcommand{\bfi}[1]{}
  \newcommand{\lp}[1]{}
  \newcommand{\jvdp}[1]{}
\newcommand{\ie}{i.e.\xspace}
\newcommand{\wrt}{w.r.t.\xspace}
\newcommand{\game}{\ensuremath{G} }
\newcommand{\LocSet}{\ensuremath{L}}
\newcommand{\loc}{\ensuremath{\ell} }
\newcommand{\ClockSet}{\ensuremath{X}}
\newcommand{\ParamSet}{\ensuremath{P}}
\newcommand{\TransSet}{\ensuremath{T} }
\newcommand{\trans}{\ensuremath{t} }
\newcommand{\temptrans}{\to^{\delta}}
\newcommand{\disctrans}{\to^{t}}
\newcommand{\src}[1]{\ensuremath{\mathit{src}(#1)}}
\newcommand{\dec}[1]{\ensuremath{\mathit{dec}(#1)}}
\newcommand{\guard}{\ensuremath{g} }
\newcommand{\guardFunction}[1]{\ensuremath{\mathit{guard}(#1)}}
\newcommand{\SubsetSet}[1]{\ensuremath{\mathcal{P}(#1)} }
\newcommand{\LabelSet}{\ensuremath{Act} }
\newcommand{\Inv}{\ensuremath{\mathit{Inv}}}
\newcommand{\val}{\ensuremath{v} }
\newcommand{\ClockValSet}[1]{ \ensuremath{\mathbb{R}_{\geq 0}^{#1}}}
\newcommand{\ParamValSet}[1]{ \ensuremath{\mathbb{Q}_{\geq 0}^{#1}}}
\newcommand{\ValSet}{\ensuremath{V}}
\newcommand{\state}{\ensuremath{s} }
\newcommand{\StateSpace}{\ensuremath{\mathbb{S}}}
\newcommand{\StateSet}{\ensuremath{S}}
\newcommand{\ParamLinearTerm}{\ensuremath{plt} }
\newcommand{\ZoneFormula}{\phi}
\newcommand{\delay}{\ensuremath{\delta} }
\newcommand{\RunSet}{\ensuremath{\mathcal{R}} }
\newcommand{\run}{\ensuremath{r} }
\newcommand{\HistSet}{\ensuremath{\mathcal{H}} }
\newcommand{\hist}{\ensuremath{h} }
\newcommand{\LastState}{ \ensuremath{ls} }
\newcommand{\TargetSet}{\ensuremath{{R}} }
\newcommand{\zone}{\ensuremath{Z} }
\newcommand{\ZoneSet}{\ensuremath{\mathcal{Z}} }
\newcommand{\strat}{\sigma}
\newcommand{\SymbState}{\ensuremath{\xi} }
\newcommand{\TempPred}[1]{ \ensuremath{ #1^{\swarrow} } }
\newcommand{\TempSucc}[1]{ \ensuremath{ #1^{\nearrow} } }
\newcommand{\ProjectParam}[1] { \ensuremath{#1{\downarrow_P}}}
\newcommand{\imitator}{\textsc{Imitator}\xspace}
\newcommand{\Uppaal}{\textsc{Uppaal}\xspace}
\newcommand{\Win}{\ensuremath{\mathit{Win}}}
\newcommand{\Depends}{\ensuremath{\mathit{Depends}}}
\newcommand{\Pred}{\ensuremath{\mathit{Pred}}}
\newcommand{\SafePred}{\ensuremath{\mathit{SafePred}}}
\newcommand{\NewWin}{\ensuremath{\mathit{NewWin}}}
\newcommand{\WinningParam}{\ensuremath{\mathit{WinningParam}}}
\newcommand{\Uncontrollable}{\ensuremath{\mathit{UnCtrl}}}
\newcommand{\Controllable}{\ensuremath{\mathit{Ctrl}}}
\newcommand{\UTempSplit}{\ensuremath{\mathit{UTempSplit}}}
\newcommand{\Explored}{\ensuremath{\mathit{Explored}}}
\newcommand{\WaitingExplore}{\ensuremath{\mathit{WaitingExplore}}}
\newcommand{\WaitingUpdate}{\ensuremath{\mathit{WaitingUpdate}}}
\definecolor{ColorLosingProp}{HTML}{1B9E77}
\definecolor{ColorCumuPrune}{HTML}{D95F02}
\definecolor{ColorCovPrune}{HTML}{7570B3}
\definecolor{ColorInc}{HTML}{E7298A}
\newcommand{\ColorBox}[2]{
  \BeginBox[draw=#1, fill=#1!20!white]
  #2
  \EndBox
}
\newcommand{\ColorString}[2]{
  \BoxedString[draw=#1, fill=#1!20!white]{#2}
}
\crefname{definition}{Def.}{Defs.}
\crefname{theorem}{Thm.}{Thms.}
\crefname{lemma}{Lem.}{Lemmas}
\crefname{line}{Line}{Lines}
\crefname{algorithm}{Alg.}{Algs.}
\crefname{figure}{Fig.}{Figs.}
\crefname{appendix}{Appendix}{Appendices}
\crefname{section}{Sec.}{Sections}
\crefname{table}{Table}{Tables}
\newcommand{\decision}{\ensuremath{d}}
\newcommand{\ControllerLoc}{\ensuremath{q}}
\newcommand{\UpTempCons}{\ensuremath{\ZoneFormula}}
\newcommand{\LoTempCons}{\ensuremath{\ZoneFormula}}
\newcommand{\UpTempConsSet}{\ensuremath{U_{\mathit{Temp}}}}
\newcommand{\LoTempConsSet}{\ensuremath{L_{\mathit{Temp}}}}
\newcommand{\UpTempBound}[2]{\partial #1^{\uparrow}_{#2}}
\newcommand{\LoTempBound}[2]{\partial #1^{\downarrow}_{#2}}
\newcommand{\UpTempGlobBound}[1]{\partial #1^{\uparrow}}
\newcommand{\LoTempGlobBound}[1]{\partial #1^{\downarrow}}
\newcommand{\UpTempClos}[1]{\overline{#1^{\uparrow}}}
\newcommand{\LoTempClos}[1]{\overline{#1^{\downarrow}}}
\newcommand{\LoTempBoundEpsilon}[2]{\partial #1_{#2}^{\downarrow \epsilon}}
\newcommand{\DelaySet}{\mathbb{R}^{\infty}_{\geq 0}}
\newcommand{\PartList}{L}
\newcommand{\PartListAlg}{\ensuremath{\mathit{InstrList}}}
\newcommand{\Wait}{\ensuremath{\mathit{Wait}}}
\newcommand{\ForcedMoves}{\ensuremath{\mathit{ForcedMoves}}}
\newcommand{\WinningMove}{\ensuremath{\mathit{WinningMove}}}
\newcommand{\InvBound}{\ensuremath{\mathit{InvBound}}}
\newcommand{\instr}{\ensuremath{I}}
\newcommand{\ind}{\ensuremath{k}}
\newcommand{\cont}{\mathcal{C}}
\newcommand{\ParallelComp}[2]{( #1 \parallel #2 )}
\centering\includegraphics[width=12.5mm]{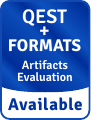}\hfill\includegraphics[width=12.5mm]{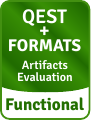}\end{minipage}}%
\title{Controller Synthesis for \\ Parametric Timed Games
     \thanks{This work was partially supported by
         CNRS international PhD programme,
         the CNRS International Research Network CLoVe
         and Innovationsfonden Danmark's DIREC project SIoT (Secure Internet of Things).
         }
    \thanks{This is the full version of the paper under the same title accepted to QEST+FORMATS 2025.}
     }
\author{
 Mikael Bisgaard Dahlsen-Jensen \inst{1}\orcidlink{0000-0003-0641-7635}
 \and
 Baptiste Fievet \inst{2}\orcidlink{0000-0002-4925-1105}
 \and \\
 Laure Petrucci \Letter\inst{2}\orcidlink{0000-0003-3154-5268}
 \and
 Jaco van de Pol \inst{1}\orcidlink{0000-0003-4305-0625}
 }
\institute{
 Aarhus University, Aarhus, Denmark
 \\\email{\{mikael,jaco\}@cs.au.dk}
 \and
 CNRS, Université Sorbonne Paris Nord, LIPN, F-93430 Villetaneuse, France
 \\ \email{\{Baptiste.Fievet, Laure.Petrucci\}@lipn.univ-paris13.fr}
 }
 \authorrunning{Dahlsen-Jensen, Fievet, Petrucci, van de Pol}
\titlerunning{Controller Synthesis for Parametric Timed Games}
\begin{document}

\maketitle

\setlength{\abovedisplayskip}{0pt}
\setlength{\belowdisplayskip}{0pt}
\begin{abstract}
    We present a (semi)-algorithm to compute winning strategies for parametric timed games.
    Previous algorithms only synthesized constraints on the clock parameters for which the game is winning.
    A new definition of (winning) strategies is proposed, and ways to compute them.
    A transformation of these strategies to (parametric) timed automata allows for
    building a controller enforcing them.
    The feasibility of the method is demonstrated by an implementation 
    and experiments
    for the Production Cell case study.
\end{abstract}

\section{Introduction}
\label{sec:intro}

Timed Games (TG)~\cite{Classic-TG} extend Timed Automata (TA)~\cite{Intro-TA} by distinguishing controllable 
and uncontrollable transitions and introducing a reachability goal. The game
is won by the controller, if he can play controllable actions, such that no matter
which uncontrollable actions are taken by the environment, the goal is reached.
An on-the-fly algorithm was introduced in \cite{OTF-TG},
to decide whether a timed game is won by the controller. This algorithm forms the basis of \Uppaal{} Tiga \cite{DBLP:conf/cav/BehrmannCDFLL07}.
Parametric Timed Automata (PTA)~\cite{DBLP:conf/stoc/AlurHV93} enable the study of an infinite family of TA by replacing concrete constraints by parameters. Given a desired property, the problem is to find all parameter valuations that satisfies it. Unlike TA, most problems are undecidable for PTA~\cite{DBLP:journals/sttt/Andre19}, although several fragments are known to be decidable, e.g., L/U PTA~\cite{DBLP:journals/jlp/HuneRSV02} and PTA with only one clock~\cite{BUNDALA2017272}.

Parametric Timed Games (PTG) combine the ideas of PTA and TG and were introduced along
with a semi-algorithm for parameter synthesis for reachability objectives~\cite{DBLP:conf/wodes/JovanovicFLR12,DBLP:journals/ijcon/JovanovicLR19}, extended and implemented in
\cite{dahlsenjensen2024ontheflyalgorithmreachabilityparametric}. The semi-algorithm
in that work enumerates all constraints on clock parameters, for which the game is
won.

\medskip
We extend the previous work in several directions. Our main goal is to generate a \emph{concrete strategy}, that tells the controller how to win the game. Thus, we automatically synthesize a controller that is correct by construction.
This strategy is represented itself by a parametric timed automaton, which can be put in parallel to the original system.
In the synchronous product of the controller and the system, every run will eventually reach the goal location.

\Uppaal{} Tiga~\cite{DBLP:conf/cav/BehrmannCDFLL07} allows generating winning strategies for TG and~\cite{DBLP:conf/formats/DavidFLZ14} proposes a translation from these to controller timed automata. However, these strategies have the ``implicit semantics" that their instructions must be taken \emph{as soon as possible}, causing ill-defined behavior in some cases.
We improve the situation by a new type of strategies, whose instructions explicitly define the interval in which
a transition must be taken, and extend this to PTG. This new notion of strategy allows
for expressing the controller as a
Parametric Timed Automaton. 

We also reconciled the algorithm from \cite{OTF-TG} with the actual implementation in \cite{DBLP:conf/cav/BehrmannCDFLL07}.
In \cite{OTF-TG} (and also in its extension \cite{dahlsenjensen2024ontheflyalgorithmreachabilityparametric}), a strategy can only rely on controllable actions, while \Uppaal{} Tiga takes the more realistic approach, 
in which the strategy can
also consider uncontrollable actions, as long as they cannot be postponed forever.
In this paper, we develop the theory of these \emph{forced uncontrollable transitions}, and adapt the algorithms
accordingly.

\subsection{Motivating Example} \label{motiv}

Let us first recall how~\cite{OTF-TG}, its parametric extension~\cite{DBLP:conf/wodes/JovanovicFLR12,DBLP:journals/ijcon/JovanovicLR19} as well as the tool \Uppaal{} Tiga itself defines strategies: they are memoryless, \ie{} provide
the next action based solely on the current state. This action can be either a discrete
transition or a \emph{Wait} instruction. In case of a \emph{Wait}, the controller
must wait until a state is reached where the strategy provides a discrete
transition, which should then be taken \emph{as soon as possible}.
However, the ``first" future state where the strategy outputs a
discrete transition may not exist.

\begin{example}
Consider the example in~\cref{fig:strat_demo:1}.
It is a Timed Game with a clock $\styleclock{x}$ and a reachability
objective with $\styleloc{Win}$ as the only target location. Controllable (resp. uncontrollable) transitions are drawn by solid (resp. dashed) arrows, have action labels, optional guards and clock resets.

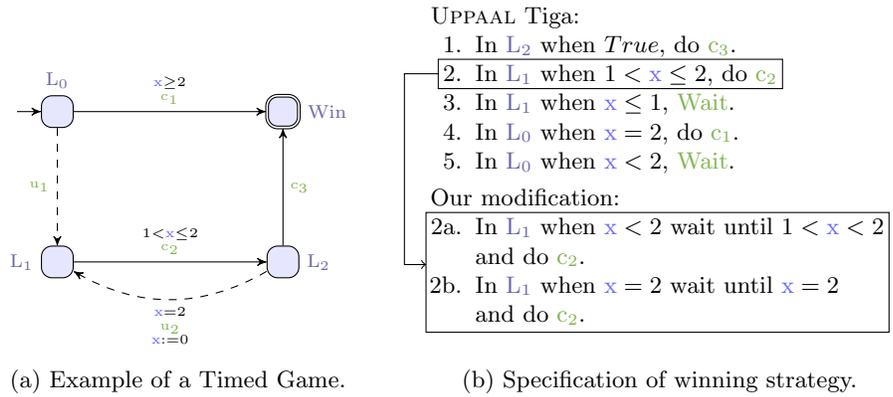
\begin{figure}[b]
    \begin{subfigure}[b]{0.45\textwidth} \centering\scriptsize
        \begin{tikzpicture}[>=stealth', node distance=3cm,initial text=]
            \node[location,initial left,label={above:\styleloc{L_0}}] (l0) {};
            \node[location,label={left:\styleloc{L_1}},below of=l0,
                node distance=2cm] (l1) {};
            \node[location,label={right:\styleloc{L_2}},right of=l1] (l2) {};
            \node[location,accepting,label={right:\styleloc{Win}},above of=l2, node distance=2cm] 
            (win) {};
            
            \draw[->,dashed] (l0) -- node [left] 
            {$\substack{\styleact{u_1}}$} (l1);
            \draw[->,dashed] (l2) edge [bend left] node [below]
                {$\substack{\styleclock{x} = 2 \\
                    \styleact{u_2} \\
                    \styleclock{x}:=0}$} (l1);
            \draw[->] (l0) -- node [above]
                {$\substack{\styleclock{x} \geq 2 \\
                    \styleact{c_1}}$} (win);
            \draw[->] (l1) -- node [above] 
                {$\substack{1 < \styleclock{x} \leq 2 \\
                    \styleact{c_2}}$} (l2);
            \draw[->] (l2) -- node [right]
                {$\substack{\styleact{c_3}}$} (win);
        \end{tikzpicture}
        \caption{Example of a Timed Game.\label{fig:strat_demo:1}}
    \end{subfigure}
    \hfill
    \begin{subfigure}[b]{0.5\textwidth}
        \Uppaal{} Tiga:
        \begin{enumerate}[nolistsep]
            \item In $\styleloc{L_2}$ when $True$, do $\styleact{c_3}$.
            \item[\tikzmark{A}2.] In $\styleloc{L_1}$ when $1 < \styleclock{x} \leq 2$, do $\styleact{c_2}$\tikzmark{B}
            \setcounter{enumi}{2}
            \item In $\styleloc{L_1}$ when $\styleclock{x} \leq 1$, $\styleact{Wait}$.
            \item In $\styleloc{L_0}$ when $\styleclock{x} = 2$, do $\styleact{c_1}$.
            \item In $\styleloc{L_0}$ when $\styleclock{x} < 2$, $\styleact{Wait}$.
        \end{enumerate}

        \smallskip

        Our modification:
        \begin{enumerate}[nolistsep]
            \item[\tikzmark{C}2a.] In $\styleloc{L_1}$ when $\styleclock{x} < 2$ wait until $1 < \styleclock{x} < 2$\tikzmark{D}\\ and do $\styleact{c_2}$.
            \item[2b.] In $\styleloc{L_1}$ when $\styleclock{x} = 2$ wait until $\styleclock{x} = 2$\\ and do $\styleact{c_2}$.
        \end{enumerate}
        \begin{tikzpicture}[overlay,remember picture]
            \draw[->] ([xshift=-.2em,yshift=1mm]pic cs:A) --++ (-.45,0) |- ([xshift=-.2em,yshift=-4mm]pic cs:C);
            \draw ([xshift=-.2em,yshift=.9em]pic cs:A) rectangle ([xshift=.2em,yshift=-.3em]pic cs:B);
            \draw ([xshift=-.2em,yshift=.9em]pic cs:C) rectangle ([xshift=.2em,yshift=-4em]pic cs:D);
        \end{tikzpicture}
        \caption{Specification of winning strategy.\label{fig:strat_demo:2}}
    \end{subfigure}
    \caption{Timed Game, strategy from \Uppaal{} Tiga, and a modified fix.}
\end{figure}

The environment can force a transition from $\styleloc{L_0}$ to $\styleloc{L_1}$ for any value $0\leq \styleclock{x}\leq 2$.
When in $\styleloc{L_1}$ there are three cases: 
If $1 < \styleclock{x} < 2$, we can immediately reach
$\styleloc{Win}$ by taking $\styleact{c_2}$ and $\styleact{c_3}$. If $\styleloc{x} \leq 1$, we can safely wait
to reach the interval $1 < \styleclock{x} < 2$, and apply the previous case.
If $\styleclock{x} = 2$, we should still take $\styleact{c_2}$ immediately, else we
would be stuck in $\styleloc{L_1}$ forever.
However, we might not be able to take $\styleact{c_3}$ from $\styleloc{L_2}$ before
the environment sends us back to $\styleloc{L_1}$ with $\styleclock{x} = 0$, where the previous case applies.
Since we have a winning strategy from $\styleloc{L_0}$ when the environment moves with
$\styleclock{x} \leq 2$, we can just wait if it happens, or else win with action $\styleact{c_1}$ when
$\styleclock{x}$ reaches value 2. 
The strategy synthesized by \Uppaal{} is illustrated in \cref{fig:strat_demo:2}.
Notice that $\styleact{c_2}$ winning in $\styleloc{L_1}$ when $\styleclock{x}
= 2$ is merged with applying $\styleact{c_2}$ in $\styleloc{L_1}$ when $1 < \styleclock{x}
< 2$.

Let us focus on how to win from state $(\styleloc{L_1}, x=0)$.
From (3), we wait until $\styleclock{x} > 1$ where (2) can be applied, taking action
$\styleact{c_2}$ as soon as possible. But such a moment does not exist: when
$\styleclock{x} = 1$, it is too early to take $\styleact{c_2}$ ; when $\styleclock{x}
= t > 1$, it is too late as we could have taken $\styleact{c_2}$ earlier at a time
$t'$ such that $1 < t' < t$.
Therefore, no run going through $(\styleloc{L_1}, \styleclock{x} = 0)$ complies
with the strategy provided by \Uppaal{}. Yet, we need to be able to react if the environment
moves from $\styleclock{L_0}$ to $\styleclock{L_1}$ when $\styleclock{x}=0$.

The way we handle this in our strategy is by adding additional information on how long to wait before acting. In~\cref{fig:strat_demo:2} (bottom), we show how we modify (2). This new specification requires the player to commit to an action in advance and wait until the specified time interval before executing it. In $\styleloc{L_1}$ when applying (2a), the action $\styleact{c_2}$ can be taken at any point in time within $1 < \styleclock{x} < 2$. This avoids the issue in \Uppaal{}'s strategy, where an action is always fired at the first available moment (even if it might not exist).

A key subtlety arises: unlike \Uppaal{}'s synthesis, we cannot simply merge (2a) and (2b).  If we did, it would be possible to follow the strategy but end up looping infinitely between $\styleloc{L_1}$ and $\styleloc{L_2}$. Instead, we keep them separate and ensure they remain disjoint. With this change, at most one iteration of the loop can happen, no matter how well the environment plays the uncontrollable actions. We will discuss the exact procedure for synthesizing such a strategy later.
\end{example}

\subsection{Outline}
In \cref{sec:model}, we recall the basic definitions of the model for Parametric
Timed Games.
\Cref{sec:InstrList} introduces our new representation of a controller strategy,
the strategy specification. Modifications to the strategy synthesis algorithm
provide a strategy specification representing a winning strategy for all parameter
values found winning.
In \cref{sec:FTS}, we adapt the model to account for forced transitions. We also
introduce the notion of temporal bounds used in an updated version of our algorithm
to solve Parametric Timed Games with forced transitions.
\Cref{sec:ContSynth} presents an algorithm that synthesizes a controller from a strategy specification
such that the controller, when synchronized with the parametric timed game, limits
its possible runs to a subset of the runs that are coherent with the strategy specification, without restricting the environment actions.
Then, \cref{sec:expe} presents our implementation and the experiments conducted.
Finally, conclusion and perspectives are drawn in \cref{sec:concl}.%

\section{Model of Parametric Timed Games}
\label{sec:model}

A Parametric Timed Game (PTG) is a structure based on timed automata (TA). Like classical automata, it has locations connected by discrete transitions. It also has clocks.
Locations are associated a condition on clock valuations (invariant) that must
be satisfied while staying in the location.
An action in a timed automaton is either to take a discrete transition or to let some
time pass.
Discrete transitions have a guard that must be
satisfied in order to take the transition. In a parametric setting, these conditions
use linear terms over clocks and parameters. Parameters are unspecified but constant during a run. A discrete transition has a subset of clocks which are reset when the transition is taken.

    A \emph{clock valuation} is a function $\val_\ClockSet \in \ClockValSet
    {\ClockSet}$ assigning a non-negative real value to each clock.
    A \emph{parameter valuation} $\val_\ParamSet \in \ParamValSet
    {\ParamSet}$ assigns a non-negative rational value to each parameter.
    A \emph{valuation of a game} is a pair $\val = (\val_\ClockSet,\val_\ParamSet)$. 
    The set of all valuations of the game is denoted
    $\ValSet = \ClockValSet{\ClockSet} \times \ParamValSet{\ParamSet}$.
A \emph{linear term} over $\ParamSet$ is a term defined by the following grammar:
$\ParamLinearTerm \; := \; k \; | \; k\styleparam{p} \; | \; \ParamLinearTerm + \ParamLinearTerm$
where $k \in \mathbb{Q}$ and $\styleparam{p} \in \ParamSet$.
\emph{Zones} allow for capturing a set of valuations in a game.
The set of \emph{parametric zones} $\ZoneSet(\ClockSet,\ParamSet)$ is the set of formulas
defined inductively by the following grammar:
\(\ZoneFormula \; := 
\; \top
\; | \; \ZoneFormula \land \ZoneFormula \; | \; \styleclock{x} \sim \ParamLinearTerm 
\; | \; \styleclock{x}-\styleclock{y} \sim \ParamLinearTerm
\; | \; \ParamLinearTerm' \sim \ParamLinearTerm\)
where $\styleclock{x}, \styleclock{y}\in\ClockSet$, ${\sim} \in \{ <; \leq; =; \geq;
> \}$ and $\ParamLinearTerm$ and $\ParamLinearTerm'$ are linear terms over $\ParamSet$.

In a two-player timed game, discrete transitions are partitioned
between controllable transitions and uncontrollable (environment) transitions.

\begin{definition}[PTG]
A \emph{Parametric Timed Game} is a tuple of the form
$\game = (\LocSet , \ClockSet, \ParamSet, \LabelSet, \TransSet_c, \TransSet_u, \loc_0, \Inv)$ such that
\begin{itemize}[noitemsep, topsep=0pt]
    \item $\LocSet$, $\ClockSet$, $\ParamSet$, $\LabelSet$ are sets of \emph{locations}, \emph{clocks},
        \emph{parameters}, \emph{transition labels}.
    \item
        $\TransSet = \TransSet_c \sqcup \TransSet_u$ is partitioned into sets of \emph{controllable} and \emph{uncontrollable} transitions.
        $\TransSet \subseteq \LocSet \times \ZoneSet(\ClockSet,\ParamSet) \times
            \LabelSet \times \SubsetSet{\ClockSet} \times \LocSet$
        is the set of \emph{transitions} of the form
        $(\styleloc{\loc},\guard,\styleact{a},Y,\styleloc{\loc'})$ where:
        \styleloc{\loc}, \styleloc{\loc'} are source
        and target locations, $\guard$ is the guard,%
\footnote{Note that we extend the definition in~\cite{dahlsenjensen2024ontheflyalgorithmreachabilityparametric}
by allowing diagonal constraints in guards. The main reason is that the synthesized controller will need diagonal constraints.}        
        
        \styleact{a} the label, $Y$ the set of clocks to reset.
    
    \item \styleloc{\loc_0} is the initial location.
    \item $\Inv \; : \; \LocSet \to \ZoneSet(\ClockSet,\ParamSet)$ associates an
        \emph{invariant} with each location.
\end{itemize}
\end{definition}

Function $\val_\ParamSet$ is naturally extended to linear terms on parameters, by
replacing each parameter in the term with its valuation.
With $\val \models \ZoneFormula$, we denote that
valuation $\val = (\val_\ClockSet, \val_\ParamSet)$ \emph{satisfies}
a zone $\ZoneFormula$.
Zones
can also be seen as a convex set in the space of valuations
by considering those satisfying the condition. 

\subsection{Semantics of Parametric Timed Games}

A \emph{state} of a PTG consists of a location and a valuation of clocks and parameters.
Transitions modify clock valuations by letting time pass or resetting clocks.

    Let $\val = (\val_\ClockSet,\val_\ParamSet)$ be a valuation of the game
    and $\delay \geq 0$ a delay.
    $\forall \styleclock{x}\in\ClockSet: (\val_\ClockSet+\delay)(\styleclock{x}) =
    \val_\ClockSet(\styleclock{x}) + \delay$
and
    $\val+\delay = (\val_\ClockSet +\delay , \val_\ParamSet)$.
    Let
    $Y \subseteq \ClockSet$.
    $\val_\ClockSet[Y:=0]$ is the valuation obtained by \emph{resetting the clocks} in $Y$,
    \ie{}:
    $\forall \styleclock{x}\in Y: \val_\ClockSet[Y:=0](\styleclock{x})=0$ and
    $\forall \styleclock{x}\in\ClockSet\setminus Y:
        \val_\ClockSet[Y:=0](\styleclock{x})=\val_\ClockSet(\styleclock{x})$
, and  
    $\val[Y:=0]= (\val_\ClockSet[Y:=0], \val_\ParamSet)$.

The semantics of a Parametric Timed Game is defined
as a timed transition system on states, with timed and discrete transitions, and a winning condition.
A \emph{state} of a PTG is a pair $(\styleloc{\loc},\val)$ where $\styleloc{\loc}$
is a location and $\val$ a valuation of the game satisfying its invariant:
$\val \models\Inv(\styleloc{\loc})$. The \emph{state space} is then
$\StateSpace = \{ (\styleloc{\loc},\val)\in \LocSet \times \ValSet \; | \;
\val \models \Inv(\styleloc{\loc}) \}  = {\bigcup}_{\styleloc{\loc} \in \LocSet}
\; \{\styleloc{\loc}\} \times \Inv(\styleloc{\loc})$.
Let $\delay \in \mathbb{R}_{\geq 0}$ be a time delay. A \emph{timed transition} is
a relation ${\temptrans} \in \StateSpace \times \StateSpace$ s.t.
$\forall (\styleloc{\loc},\val),(\styleloc{\loc'},\val')\in \StateSpace:
(\styleloc{\loc},\val)\temptrans (\styleloc{\loc'},\val')$ iff
$\styleloc{\loc} = \styleloc{\loc'}$ and $\val' = \val + \delay$.
Let $\trans = (\styleloc{\loc},\guard,\styleact{a},Y,\styleloc{\loc'}) \in \TransSet$
be a transition. A \emph{discrete transition} is a relation
${\disctrans} \in \StateSpace \times \StateSpace$ s.t.
$\forall (\styleloc{\loc},\val), (\styleloc{\loc'},\val')\in \StateSpace:
(\styleloc{\loc},\val) \disctrans (\styleloc{\loc'},\val')$ iff
$\val \models \guard$ and $ \val' =  \val[Y:=0]$.

States are grouped in symbolic states, similar to valuations grouped in zones.

\begin{definition}[symbolic state]
    A \emph{ symbolic state} of a PTG is a pair $\SymbState=(\styleloc{\loc},\zone)$ where $\styleloc{\loc}$
    is a location and $\zone$ a zone of the game satisfying its invariant:
    $\zone \models \Inv(\styleloc{\loc})$.
    The symbolic state $(\styleloc{\loc},\zone)$ represents the subset of states $
    \{\styleloc{\loc}\} \times \zone \subseteq \StateSpace$.

    \noindent
    The set of \emph{temporal successors} of $\SymbState$ is defined
    as $\TempSucc{\SymbState}=\{\state' \mid \exists\,\delay,
    \exists \state\in\SymbState: \state \temptrans{} \state'\}$. Similarly, its \emph{temporal predecessors} are defined
    $\TempPred{\SymbState}=\{\state \mid \exists\,\delay,
    \exists \state'\in\SymbState: \state \temptrans{} \state'\}$.
\end{definition}

Let $\vec{0}$ be the clock valuation where all clocks have value $0$.
The set of possible initial states of the PTG is
$\SymbState_0 = \{ (\styleloc{\loc_0}, (\vec{0}, \val_\ParamSet)) \; | \;
\val_\ParamSet \in \ParamValSet{\ParamSet}:
(\vec{0}, \val_\ParamSet) \models \Inv(\styleloc{\loc_0}) \}$.

A \emph{run} of the PTG $G$ is a finite or infinite
sequence of states
$\state_0 \state_1 \state_2 \ldots$ s.t. $\state_0 \in \SymbState_0$ and
$\forall i \in \mathbb{N}, \exists\delay \in \mathbb{R}_{\geq 0}, \exists s\in\StateSpace,\;
\state_{i} \temptrans s \disctrans \state_{i+1}$.
A \emph{history} is a finite run. Given a history $h$, we define $ls(h)$
to be the last state of $h$.
Given a game $\game$, we denote its set of runs by $\RunSet$ and
the histories by $\HistSet$.

Since we deal with reachability games, the objective is specified by a set of locations in the PTG.
A run is winning 
if it reaches one of these locations.
Let $\TargetSet\subseteq\LocSet$ be a \emph{reachability objective}.
The set of \emph{winning runs} $\Omega_{Reach}(\TargetSet)$ is the subset of runs that visit $\TargetSet$:
$\Omega_{Reach}(\TargetSet) = \{ \run \in \RunSet \mid \exists\,\styleloc{\loc}\in \TargetSet, \exists
\val\in\ValSet: (\styleloc{\loc},\val)\in \run\}$.

\subsection{Strategies in Parametric Timed Games}

 A controller strategy $\strat_c$
        models decision-making. Building upon the strategy definition of~\cite{DBLP:conf/adhs/ChatainDL09}, it is a function taking a history and deciding to either wait indefinitely, or to wait for a finite delay and apply a discrete transition.
In the latter case, we will require the transition to be available after the delay
(\ie{} the invariant holds and the guard of the transition is enabled).

\Cref{motiv} exhibited a critical delay interval to select an action.
We now introduce controller strategies that, given a history, return a non-empty set
of valid decisions, one of which will be chosen non-deterministically.

    \begin{definition}[controller strategy]
        A \emph{controller strategy} $\strat$ is a function
        $\strat: \mathcal{H} \to \SubsetSet{(\mathbb{R}_{\geq 0} \times \TransSet_c) \cup \{\infty\}} \setminus \{\emptyset\}$, 
        s.t.\ for all $h \in \mathcal{H}$, if $(\delay, \trans_c) \in \strat(h)$,
         then for some $\state_{\delay}, \state_{t_c} \in\StateSpace$, 
         $ls(h) \to^{\delay} \state_{\delay} \to^{t_c} \state_{t_c}$.
         
         The \emph{delay of a decision} is: 
         $delay(\delta,t)=\delta$ and $delay(\infty)=\infty$.
    \end{definition}

    Note $\infty$ denotes that we ``give up our turn", not necessarily that we stay in this state
    forever. So we do not require that the invariant stays true in this case.

    A controller strategy $\strat$ induces a set of runs coherent with the strategy.
    Either a decision of the controller strategy is applied, or a decision leading to an uncontrollable action occurs with a delay lower or equal to the delay of a controller decision of the strategy. This corresponds to the case where an uncontrollable action intercepts a controller decision of the strategy.
    If both the controller and the environment select the same delay, we consider that the controller cannot guarantee that
    its transition will be taken, thus the environment can intercept.

\begin{definition}[run coherent with controller strategy]
    Run $\run = s_0 s_1 s_2 \ldots$ is coherent with the controller strategy $\strat$ if and only if $\forall i \in \mathbb{N}$:
    \begin{itemize}[noitemsep, topsep=0pt]
        \item Either the run ends in $s_i$, and $\infty \in \strat(s_0 s_1 \ldots s_i)$
        \item Or there exists $ (\delay, \trans_c) \in \strat(s_0 s_1 \ldots s_i)$ and $\state' \in \StateSpace$ s.t.\ 
        $s_i\to^{\delta} s' \to^{\trans_c} s_{i+1}$.
        \item Or there exists $\decision \in \strat(s_0 s_1 \ldots s_i)$, $\delay' \leq delay(\decision) \in \DelaySet$, $\trans_u \in \TransSet_u$ and $s' \in \StateSpace$ such that 
        $s_i\to^{\delay'} s' \to^{\trans_u} s_{i+1}$.
    \end{itemize}
\end{definition}

\begin{definition}[winning strategy]
    A controller strategy $\sigma$ is \emph{winning} w.r.t.\ a reachability objective 
    $\TargetSet$ iff all runs coherent with $\sigma$ are winning w.r.t.\ $\TargetSet$.
\end{definition}

Previous work (\cite{dahlsenjensen2024ontheflyalgorithmreachabilityparametric}) 
introduced an algorithm aimed at solving the following question: 
Given a Parametric Timed Game $\game$ and a
Reachability Objective $\TargetSet$, for which parameter valuations is there a winning controller strategy from the initial state?
We now aim to solve the following questions: \emph{How can we produce a controller strategy that guarantees to win from those parameters valuations? Can we express the controller as a PTA, to be synchronized with the system?}

\section{Strategy Specification and Synthesis}
\label{sec:InstrList}

We now introduce strategy specifications, as a means to specify infinite strategies by finite objects. 
A strategy specification is a list of instructions. Instructions can be of two types: 
First, $( \SymbState, \bot, \Wait)$ where $\SymbState$ is a symbolic state and $\Wait$
is
a signal to wait. This instruction represents the instruction ``When in $\SymbState$,
wait indefinitely''. Second, $(\SymbState, \SymbState', \trans_c)$ where $\SymbState$
and $\SymbState'$ are symbolic states and $\trans_c$ is a controllable transition.
This represents the instruction ``When in $\SymbState$, wait until $\SymbState'$ is
reached, and apply the discrete transition $\trans_c$".

\begin{definition}\label{def:strat:instr}
        A strategy specification $\PartList$ is a list of instructions $i$
        of the form: $i=( \SymbState, \SymbState', \trans_c)$ or $i=( \SymbState,
        \bot, \Wait)$, where $\SymbState$ and
        $\SymbState'$ are symbolic states, $\trans_c$ is a controllable transition
        such that $\trans_c$ is applicable from every state $\state'$ of $\SymbState'$
        and $\SymbState \subseteq \TempPred{\SymbState'}$, $\bot$ is undefined
        and $\Wait$ is a signal to wait.

      The source of an instruction $i$ is denoted by $\src{i}=\SymbState$.
\end{definition}

A state $\state\in\StateSpace$ matches an instruction $i$ of some strategy specification
$\PartList{}$ when $\state \in \src{i}$.
If the current state does not match any instruction of $\PartList{}$,
the default action is to wait indefinitely, as with an instruction of the form $(\SymbState,\bot,\Wait)$.
If the state matches an instruction of the form $(\SymbState, \SymbState', \trans_c)$,
a delay leading to $\SymbState'$ is selected non-deterministically.
From there, transition $\trans_c$ is taken.

\begin{definition}[decisions]
    The set of possible \emph{decisions} specified 
    by instructions from a strategy specification $\PartList$, given current state $\state\in\StateSpace$,
    is defined as follows:
      \begin{align*}
      \dec{(\SymbState, \bot, \Wait),\state} & := \{\infty\} \\
      \dec{(\SymbState, \SymbState', \trans),\state} & := 
	\{ \delta \in \mathbb{R}_{\geq 0} \; | \; \exists \state_\delta \in \SymbState' \text{s.t.\ } 	\state \to^{\delta} \state_\delta \} \times \{\trans\}
	\end{align*}

\noindent The controller strategy $\strat_\PartList$ specified by $\PartList$ on history $\hist\in\HistSet{}$
is defined
as follows:
    \begin{align*}
      \strat_\PartList(\hist) & = \bigcup \{ \dec{i,\LastState(\hist)}
      \mid {i\in \PartList} \textrm{ and }\LastState(\hist) \textrm{ matches } i\}   && \text{if not empty, } \\ 
      \strat_\PartList(\hist) & = \{\infty\} &&\text{otherwise}.
    \end{align*}

\end{definition}

\subsection{Synthesis of Strategy Specification}

An algorithm to compute valuations of the parameters for which a winning strategy
exists, is presented in~\cite{dahlsenjensen2024ontheflyalgorithmreachabilityparametric}.
We quickly recall how this algorithm works, before introducing some modifications
to synthesize a winning strategy specification.
\Cref{alg:main} presents the algorithm, where our modifications are highlighted in
green boxes. The blue boxes will be addressed in \cref{sec:FTS}. It features two main
parts: the exploration ({\sc explore}) and the back-propagation of winning states
({\sc update}).


\begin{algorithm}

  \caption{For PTG $\game = (\LocSet, \ClockSet, \ParamSet, \LabelSet, \TransSet_c,
  \TransSet_u, \styleloc{\loc_0}, \Inv)$ and reachability objective $\TargetSet$,
  returns the set of all parameter valuations that win the game.
  \label{alg:main}}
  \begin{algorithmic}[1]
    \State $\Explored, \WaitingUpdate, \WaitingExplore \gets \emptyset, \emptyset, \{\TempSucc{\SymbState_0}\}$
    \Comment{Symbolic state sets}
    \State $\mathit{Win}[\,], \Depends[\,]$
      \ColorString{ColorCovPrune}{$, \ForcedMoves[\,]$} $\gets \emptyset, \emptyset,
      \emptyset$
    \Comment {Maps from symbolic states}
    \ColorBox{ColorLosingProp}{
    	\State $\PartListAlg := [\,]$
    	\Comment {List of triplets (Symbolic state, Symbolic state, action)}
    }
    \State $\WinningParam := \mathtt{False}$

    \medskip
    \Function{solvePTG}{}
    \While{$\WaitingExplore \neq \emptyset \lor \WaitingUpdate \neq \emptyset$
    } \label{alg:line:mainloop}
    \State Choose either \Call{explore}{ \hspace*{-1mm}} or  \Call{update}{ \hspace*{-1mm}}
    \label{alg:line:functionCall}
    \EndWhile
    \State \Return $(\WinningParam$\ColorString{ColorLosingProp}{$, \PartListAlg$})
    \EndFunction

    \medskip
    \Procedure{explore}{}
    \State Pick $\SymbState$ from $\WaitingExplore$ \label{alg:line:explore:pick}
    \For{\trans transition from \SymbState} \label{alg:line:explore:forbegin}
    \State $\SymbState' := \TempSucc{Succ(\trans, \SymbState)}$
    \State $\Depends[\SymbState'] \gets \Depends[\SymbState'] \cup \{\SymbState\}$
    \If{$\SymbState' \not\in \Explored$}
     $\WaitingExplore \gets \WaitingExplore \cup \{\SymbState'\}$
    \EndIf
    \EndFor \label{alg:line:explore:forend}
    \If{$\SymbState.\loc \in \TargetSet$} \label{alg:line:explore:beginif}
    \State $\mathit{Win}[\SymbState] \gets \SymbState$, $\WaitingUpdate \gets \WaitingUpdate \cup \Depends[\SymbState]$\label{alg:line:explore:winupdate}
    
    \ColorBox{ColorLosingProp}{
    	\State $\PartListAlg \gets \PartListAlg \cup \{(\SymbState, \bot, \Wait)\}$
        \label{alg:line:explore:instrlist}
    }
    \EndIf\label{alg:line:explore:endif}
    \ColorBox{ColorCovPrune}{
    	\State $\Uncontrollable := 
    		\bigcup\{\guardFunction{\trans_u} \mid \trans_u \in \TransSet_u \,\land\,
        \trans_u \text{ transition from } \SymbState\}$\label{alg:line:explore:ft:Uguard}
    	\State $\Controllable := 
        \bigcup\{\guardFunction{\trans_c} \mid \trans_c \in \TransSet_c \,\land\,
        \trans_c \text{ transition from } \SymbState\}$\label{alg:line:explore:ft:Cguard}
    	\State $(\InvBound_{\mathit{In}}, \InvBound_{\mathit{Out}}) := \UTempSplit(\Inv
      (\SymbState.\loc))$
        \label{alg:line:explore:ft:split}
    	\State $\ForcedMoves[\SymbState] \gets (\InvBound_{\mathit{In}} \cap \Uncontrollable)
    		\setminus \Controllable$\label{alg:line:explore:ft:fmin}
    	\State $\ForcedMoves[\SymbState] \gets \ForcedMoves[\SymbState] \cup (\InvBound_{\mathit{Out}} \cap \overline{\Uncontrollable}) \setminus \overline{\Controllable}$
        \label{alg:line:explore:ft:fmout}
	}
    \State $\WaitingUpdate \gets \WaitingUpdate \cup \{\SymbState\}$ \label{alg:line:explore:toupdate}, $\Explored \gets \Explored \cup \{\SymbState\}$ \label{alg:line:explore:explored}
    \EndProcedure

    \medskip
    \Procedure{update}{}
    \State Pick $\SymbState$ from $\WaitingUpdate$
    \State $\Uncontrollable := 
      \bigcup\{
    		\Pred(t_u,\SymbState' \setminus \mathit{Win}[\SymbState'])\mid 
        \trans_u \in \TransSet_u \,\land\,\SymbState' =\TempSucc{Succ(\trans_u, \SymbState)}\}$\label{alg:line:update:uncontrol}
    \For{\trans controllable transition from \SymbState}
      \State $\WinningMove := 
      \Pred(\trans, \mathit{Win}[\TempSucc{Succ(\trans, \SymbState)}])$
        \ColorString{ColorLosingProp}{$\setminus\, \Uncontrollable$}
			\Comment{Union of zones} \label{alg:line:update:winmove}
    	\For{$\SymbState_i \in \WinningMove$}
		    \State $\NewWin_i := \SafePred( \SymbState_i, \Uncontrollable) \:
					     \cap \: \SymbState$ \label{alg:line:update:safepred}
        \ColorBox{ColorLosingProp}{
          \For{$(\SymbState_{\mathit{solved}},\_,\_) \in \PartListAlg$}
            \label{alg:line:update:begininstr}
            \State $\NewWin_i \gets \NewWin_i \setminus \SymbState_{\mathit{solved}}$
          \EndFor\label{alg:line:update:endinstr}
          \State $\PartListAlg \gets \PartListAlg \cup
              (\NewWin_i, \SymbState_i, \trans)$\label{alg:line:update:instrcontrol}\label{alg:line:update:pair1a}
        }
        \State $\NewWin \gets \NewWin \cup \NewWin_i$\label{alg:line:update:pair1b}
  		\EndFor
  	\EndFor
    \ColorBox{ColorCovPrune}{ 
		\For{$\SymbState_i$ zone in the union of zones $\ForcedMoves[\SymbState]$}
      \label{alg:line:update:ft:begin}
			\State $\NewWin_i := \SafePred( \SymbState_i, \Uncontrollable) \:  \cap
				\: \SymbState$
			\For{$(\SymbState_{\mathit{solved}},\_,\_) \in \PartListAlg$}\label{alg:line:update:remove_overlapsA}
				\State $\NewWin_i \gets \NewWin_i \setminus \SymbState_{\mathit{solved}}$
			\EndFor\label{alg:line:update:remove_overlapsB}
      \State $\NewWin \gets \NewWin \cup \NewWin_i$\label{alg:line:update:pair2a}
			\State $\PartListAlg \gets \PartListAlg \cup (\NewWin_i, \bot, \Wait)$\label{alg:line:update:pair2b}
      \label{alg:line:update:ft:end}
		\EndFor
	}
    \If{$\NewWin \not\subseteq \mathit{Win}[\SymbState]$}\label{alg:line:update:startend}
    	\State $\WaitingUpdate \gets \WaitingUpdate \cup \Depends[\SymbState]$, $\mathit{Win}[\SymbState] \gets \mathit{Win}[\SymbState] \cup \NewWin$\label{alg:line:update:addnewwin}
    \EndIf\label{alg:line:update:finishend}
    \State $\WinningParam \gets \ProjectParam{(\mathit{Win}[\SymbState] \cap
    	\SymbState_0)}$ \label{line:report}
    \EndProcedure
  \end{algorithmic}
\smallskip
\hrule
\smallskip
  {\scriptsize Green boxes are additions to the original algorithm \cite{dahlsenjensen2024ontheflyalgorithmreachabilityparametric}
  for generating a strategy specification. Blue boxes denote additions for forced transition semantics, explained in \cref{sec:FTS}.}
\end{algorithm}

\medskip
\noindent{\bf Original algorithm from~\cite{dahlsenjensen2024ontheflyalgorithmreachabilityparametric}.}
The overall idea is to compute reachable states while propagating winning conditions backward based on controllable and uncontrollable transitions. Starting from the initial symbolic state $\SymbState{}_0$, as long as some states
remain to be explored or updated, they are handled by the appropriate function (\cref{alg:line:mainloop,alg:line:functionCall}).

Procedure \textsc{explore} chooses a symbolic state $\SymbState{}$ in the set of states to explore $\WaitingExplore{}$ and removes it from this set (\cref{alg:line:explore:pick}).
It then considers one by one all discrete transitions that can be taken from $\SymbState{}$
(\crefrange{alg:line:explore:forbegin}{alg:line:explore:forend}). The successor symbolic
states $\SymbState{}'$ are generated by computing the states reachable from $\SymbState{}$ via the transition ($Succ$) and letting time pass.
$\Depends$ records $\SymbState{}$ as a predecessor of $\SymbState{}'$. $\SymbState{}'$ is added to the set of states to explore, if not already explored yet.
If the location of $\SymbState{}$ is a target, it is considered winning, and its
predecessors added to the set waiting for a back-propagation update (\crefrange{alg:line:explore:beginif}
{alg:line:explore:endif}). In any case, $\SymbState{}$ is also added to that set
(\cref{alg:line:explore:toupdate}),
and declared explored (\cref{alg:line:explore:explored}).

Procedure \textsc{update} performs the back-propagation of winning zones. It operates
on a symbolic state $\SymbState{}$ picked from $\WaitingUpdate{}$.
First, the predecessors
(\Pred) of successors of $\SymbState{}$ by an uncontrollable transition that are
 not already
known as winning are stored in a set $\Uncontrollable{}$
(\cref{alg:line:update:uncontrol}). The predecessors
of the winning part of the successors of $\SymbState{}$ via controllable actions outside $\Uncontrollable{}$
are considered as winning
(\cref{alg:line:update:winmove}).
For such states, 
the safe temporal predecessors (\SafePred) are computed, \ie{} the temporal predecessors
that avoid reaching
a state in $\Uncontrollable$ (\cref{alg:line:update:safepred}). These are the states
from which
the controller can win in $\SymbState$. Finally, if the found winning part ($\NewWin$) is different from the one stored in $\Win[\SymbState]$, it is updated with the new value and a back propagation is initiated by adding the predecessors of $\SymbState$ to $\WaitingUpdate$
(\crefrange{alg:line:update:startend}{alg:line:update:finishend}). The projection
of the zone on its parameters valid in the initial state is reported (\cref{line:report}).

\medskip
\noindent{\bf Modifications for generating the strategy specification.}
In this paper, we not only report the parameters but also construct the winning strategy. \cite{DBLP:conf/wodes/JovanovicFLR12} and \cite{DBLP:journals/ijcon/JovanovicLR19} also discuss how to do this, however their constructed strategies are equivalent to those of \Uppaal (with the addition of parameters) that we compare with in \cref{sec:intro}.
Procedure \textsc{explore} constructs at \cref{alg:line:explore:instrlist} the list
of instructions in the strategy, as defined in \cref{def:strat:instr}, in the case
of a target state reached, and thus time can elapse. The other case, where a controllable
transition must be taken from a state, is handled by the \textsc{update} procedure
at \crefrange{alg:line:update:begininstr}{alg:line:update:endinstr}: states that are already present in the instructions list cannot be considered as
newly winning. The remaining lead to an instruction in the strategy specification
(\cref{alg:line:update:instrcontrol}).

We now state the correctness of the algorithm, as proved in \cref{app:proofs3}.

\begin{restatable}[correct strategy]{theorem}{strategyWinning}
  For all winning initial states, the strategy associated with the specification
  synthesized by \cref{alg:main}
  is winning.
\end{restatable}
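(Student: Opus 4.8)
\textit{Proof plan.}
The plan is to equip every winning state with a natural-number rank and to show that this rank strictly decreases along every run coherent with $\strat_\PartList$, until a target location is visited; at that point the run belongs to $\Omega_{Reach}(\TargetSet)$ and is therefore winning.

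\emph{Loop invariant of \cref{alg:main}.} Write the list the algorithm produces as $\PartListAlg=[i_1,\dots,i_N]$ in insertion order, where $i_m=(\SymbState_m,\SymbState'_m,\trans^m)$ or $i_m=(\SymbState_m,\bot,\Wait)$, and $\SymbState_m=\src{i_m}$. I would establish, by induction over the iterations of the main loop, the conjunction of: (i) \emph{well-formedness} --- every $i_m$ satisfies the requirements of \cref{def:strat:instr}; (ii) \emph{disjoint coverage} --- the sources of the control instructions are pairwise disjoint (this is exactly what the subtractions $\NewWin_i\setminus\SymbState_{\mathit{solved}}$ on \crefrange{alg:line:update:begininstr}{alg:line:update:endinstr} enforce), and $\bigcup_{\SymbState\in\Explored}\bigl(\Win[\SymbState]\cap\SymbState\bigr)\subseteq\bigcup_{m\le N}\src{i_m}$; (iii) \emph{progress} --- whenever the \textsc{update} call that creates a control instruction $i_m$ handles a controllable transition with successor symbolic state $\SymbState''$, the value of $\Win[\SymbState'']$ used there is contained, up to states whose location already lies in $\TargetSet$, in $\bigcup\{\src{i_j}\mid j<m\}$; moreover $\SymbState'_m\subseteq\overline{\Uncontrollable}$ for the set $\Uncontrollable$ of that call (a consequence of the green-box $\setminus\Uncontrollable$ on \cref{alg:line:update:winmove}), and every uncontrollable successor of a state of $\SymbState\setminus\Uncontrollable$ lies in the $\Win$-value then held for that successor symbolic state.

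\emph{The rank and its descent.} Set $\mathit{rank}(\state)=0$ when $\state.\loc\in\TargetSet$, and $\mathit{rank}(\state)=\min\{m\mid \state\in\SymbState_m,\ i_m\text{ a control instruction}\}$ otherwise; by (ii) this is well defined on $\bigcup_{\SymbState}\bigl(\Win[\SymbState]\cap\SymbState\bigr)$, which contains every winning initial state (inside $\TempSucc{\SymbState_0}$) by completeness of the reachability-game fixpoint of \cite{dahlsenjensen2024ontheflyalgorithmreachabilityparametric}. Fix such a state $\state_0$ and a run $\run=\state_0\state_1\dots$ coherent with $\strat_\PartList$; since $\strat_\PartList$ is memoryless, every suffix of $\run$ again meets the coherence conditions, so it suffices to prove by induction on $\mathit{rank}(\state_0)$ that $\run$ visits $\TargetSet$. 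If $\mathit{rank}(\state_0)=0$ we are done. Otherwise $\state_0.\loc\notin\TargetSet$, so $\state_0$ matches no $\Wait$ instruction and, by (ii), matches a unique control instruction $i_m$ with $m=\mathit{rank}(\state_0)$; hence $\strat_\PartList(\state_0)=\dec{i_m,\state_0}=\{\delta\mid \state_0\to^{\delta}\state_\delta\in\SymbState'_m\}\times\{\trans^m\}$, which is non-empty by (i) and contains no $\infty$, so $\run$ cannot stop at $\state_0$. Thus $\state_1$ arises either (a) by the controllable move of $i_m$, with $\state_\delta\in\SymbState'_m$, so $\state_1\in\Win[\SymbState'']$; or (b) by an uncontrollable interception $\state_0\to^{\delta'}\state'\to^{\trans_u}\state_1$ with $\delta'\le\delta$ for some admissible delay $\delta$; in case (b), using that $\state'\notin\Uncontrollable$ (shown below), the $\trans_u$-successor $\state_1$ is not in $\widehat\SymbState\setminus\Win[\widehat\SymbState]$ for the symbolic state $\widehat\SymbState$ reached by $\trans_u$, and therefore lies in $\Win[\widehat\SymbState]$. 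In both cases (iii) gives $\mathit{rank}(\state_1)<m$ and $\state_1$ winning, so the induction hypothesis applied to $\state_1\state_2\dots$ shows it --- hence $\run$ --- visits $\TargetSet$.

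\emph{The main obstacle.} The delicate point is the claim used in case (b): from a state $\state_0\in\SafePred(\SymbState_m,\Uncontrollable)\cap\SymbState$ minus the earlier sources, \emph{every} admissible delay to $\SymbState'_m=\SymbState_m$ --- not merely the witness produced by $\SafePred$ --- keeps the run clear of $\Uncontrollable$, so that an interception at any $\delta'\le\delta$ leaves the run inside the winning region. This is where time-convexity of zones combines with $\SymbState_m\subseteq\overline{\Uncontrollable}$: the set $\{d\mid \state_0+d\in\SymbState_m\}$ is an interval $[a,b]$; $\SafePred$ provides some $\delta^*\in[a,b]$ with $[0,\delta^*]\cap\Uncontrollable=\emptyset$; and for any other $\delta\in[a,b]$ the part of $[0,\delta]$ beyond $\delta^*$ stays inside $\SymbState_m\subseteq\overline{\Uncontrollable}$, hence $[0,\delta]\cap\Uncontrollable=\emptyset$ as well. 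I expect this coupling --- making the non-deterministic delays of $\dec{\cdot}$ interact safely with $\SafePred$ and $\Uncontrollable$, while the disjointness forced by the $\setminus\SymbState_{\mathit{solved}}$ pruning sends every instruction's successor into a strictly earlier source and thereby rules out the infinite loop between $\styleloc{L_1}$ and $\styleloc{L_2}$ of the motivating example --- to be the main difficulty. Given that, items (i) and (ii) follow by a routine bookkeeping induction that inspects \textsc{explore} (\cref{alg:line:explore:instrlist}) and \textsc{update} (\crefrange{alg:line:update:begininstr}{alg:line:update:pair1b}), and the argument carries over to the forced-transition semantics of \cref{sec:FTS} with the obvious changes.
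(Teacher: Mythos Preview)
Your proposal is correct and follows essentially the same route as the paper's proof: number the instructions in insertion order, show that the sources are pairwise disjoint and cover $\Win$, and prove that along any coherent run the index/rank strictly decreases until a target is hit, splitting the inductive step into the controllable and the uncontrollable-interception cases. Your ``main obstacle'' paragraph---arguing via convexity of $\SymbState'_m$ together with $\SymbState'_m\cap\Uncontrollable=\emptyset$ that \emph{every} admissible delay (not only the $\SafePred$ witness) keeps the run out of $\Uncontrollable$---is exactly the two-case argument the paper gives, only spelled out more carefully; note a small notational slip where you write $\SymbState_m$ in place of $\SymbState'_m$ in that paragraph (the $\SafePred$ argument should read $\state_0\in\SafePred(\SymbState'_m,\Uncontrollable)$ and $\SymbState'_m\subseteq\overline{\Uncontrollable}$).
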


\begin{example}
\begin{figure}[b]
  \begin{subfigure}{.4\textwidth}
  \centering
    \begin{tikzpicture}[>=stealth', node distance=2.5cm,initial text=]
        \node[location,initial above,label={left:\styleloc{L_0}}] (l0) {};
        \node[location,label={left:\styleloc{L_1}},below of=l0,
            node distance=1cm] (l1) {};
        \node[location,label={right:\styleloc{Lose}},right of=l1] (lose) {};
        \node[location,accepting,label={right:\styleloc{Win}},above of=lose, node
        distance=1cm](win) {};

        \draw[->] (l0) -- node [left] 
        {$\substack{\styleclock{x} > 1\\
            \styleact{c_1}
        }$} (l1);
        \draw[->,dashed] (l1) -- node 
        {$\substack{\styleclock{x} < \styleparam{p} \\
            \styleact{u_1}
        }$} (lose);

        \draw[->] (l1) -- node [above] 
        {$\substack{\styleact{c_2}
        }$} (win);
    \end{tikzpicture}
    \caption{Example PTG
    \label{fig:generation_ex_1}}
\end{subfigure}
\hfill
\begin{subfigure}{.55\textwidth}
      \centering
      \begin{adjustbox}{width=1\textwidth}
      \begin{tabular}{l  l@{\hspace{2em}} l@{\hspace{2em}} l}
          \toprule
          \textbf{Location} & \textbf{Condition} & \textbf{Wait Until} & \textbf{Action} \\
          \midrule
          \styleloc{L_0} & $0\leq \styleparam{p}$ & $0\leq \styleparam{p} \leq \styleclock{x} \land 1 < \styleclock{x}$ & $\styleact{c_1}$ to reach \styleloc{L_1}. \\
          \midrule
          \styleloc{L_1} & 
          $0\leq \styleparam{p} \leq \styleclock{x} \land 1 < \styleclock{x}$
          & $0\leq \styleparam{p} \leq \styleclock{x} \land 1 < \styleclock{x}$ & $\styleact{c_2}$ to reach \styleloc{Win} \\
          \bottomrule
      \end{tabular}
    \end{adjustbox}
  
  \caption{Strategy to Ensure Controller Win}
  \label{tab:strategy}
\end{subfigure}
\caption{Example PTG and a corresponding strategy generated by the algorithm}
\end{figure}

Applying the algorithm on the game of \cref{fig:generation_ex_1}
returns $\styleparam{p} \geq 0$ as well as the strategy depicted in \cref{tab:strategy}.
The strategy is exactly as expected: the player should take $\styleact{c_1}$ to $
\styleloc{L_1}$ when $\styleclock{x} \geq \styleparam{p}$ (as well as when $\styleclock{x} > 1$, imposed by the guard) and then $\styleact{c_2}$ to $\styleloc{Win}$ immediately. 
\end{example}

\section{Forced Transition Semantic}
\label{sec:FTS}
In~\cite{OTF-TG}, the environment may remain in a location with an invariant when an action is available the moment the invariant breaks --- effectively allowing the environment to violate the invariant by idling. While this is a valid choice, it is not very natural and disallowed in practice; e.g., \Uppaal Tiga \cite{DBLP:conf/cav/BehrmannCDFLL07} enforces that the environment takes an available action (if no controllable action exists) when an invariant is about to break. 
We introduce a forced transition semantics to formalize this behavior, thus justifying the practice in~\cite{DBLP:conf/cav/BehrmannCDFLL07}.

\begin{definition}[forced transition constraint]
    A run $\run$ satisfies the \emph{forced transition constraint} if and only if
    one of the following conditions holds:
    \begin{itemize}[noitemsep, topsep=0pt]
        \item $\run$ is an infinite sequence.
        \item $\run$ is a finite sequence and for all $\delay \in \DelaySet$, there exists $\state_\delay\in \StateSpace$ such that $\LastState(\run) \temptrans \state_\delay$ (No further discrete transition and no invariant break).
        \item $\run$ is a finite sequence and there exists a delay $\delay \in \DelaySet$ such that for all $\delay' \geq \delay$ and $\state_{\delay'} \in \StateSpace$ such that $\LastState(\run) \to^{\delta'} \state_{\delay'}$,  a \emph{controllable} action is available in $\state_{\delay'}$ or no \emph{uncontrollable} action is available in $\state_{\delay'}$. 
    \end{itemize}
\end{definition}

A 
strategy is winning under the forced transition constraint if all its coherent runs satisfying the forced transition constraint are winning.
A state is winning under the forced transition constraint if it has a winning controller strategy.

\subsection{Temporal Bounds}

To solve a PTG with forced transitions, we describe when a transition can
be forced. This happens at the temporal upper bound of the invariant. We also distinguish if this bound intersects the invariant or not.
To compute the temporal upper bound of a zone, we distinguish
how upper and lower temporal constraints restrict time elapsing. Diagonal constraints do not restrict time elapse.

\begin{definition}
    \emph{Upper} (resp. lower) \emph{temporal constraints} $\UpTempCons$ 
    of a zone $\zone = \bigwedge \ZoneFormula_i$ are
    constraints $\ZoneFormula_i$ of type $ \styleclock{x} \sim \ParamLinearTerm $ with ${\sim} \in \{<, \leq, = \}$ (resp. ${\sim} \in \{>, \geq, = \}$). 
    Furthermore, all zones have implicit temporal lower constraints $\styleclock{x}
    \geq 0$ for all $\styleclock{x} \in \ClockSet$.
    The set of upper (resp. lower) temporal constraints of $\zone$ is denoted by
    $\UpTempConsSet(\zone)$  (resp. $\LoTempConsSet(\zone)$).
    All other constraints are called \emph{diagonal constraints}.
\end{definition}

Note that the temporal upper bound of a zone may not be included into the zone itself:
In \cref{fig:FT}, both invariant $\styleclock{x} \leq \styleparam{p}$ and $\styleclock{x} < \styleparam{p}$ have the same temporal upper bound $\styleclock{x} = \styleparam{p}$ while only $\styleclock{x} \leq \styleparam{p}$ contains it.

\begin{definition}
    Let zone $\zone$ consist of the set of constraints $\ZoneFormula$. The
    upper (resp. lower) \emph{temporal closure} of $\zone$, denoted by
    $\UpTempClos{\zone}$ (resp. $\LoTempClos{\zone}$) 
    keeps all diagonal constraints unchanged, and 
    applies the
    following transformations to $\ZoneFormula$:
    \begin{itemize}[noitemsep, topsep=0pt]
        \item Make all upper (resp. lower) temporal constraints of $\ZoneFormula$ non-strict.
        \item Make all lower (resp. upper) temporal constraints of $\ZoneFormula$ strict.
    \end{itemize}
\end{definition}

Let $\zone$ be a zone with an upper temporal constraint $\UpTempCons = \styleclock{x} \sim \ParamLinearTerm$. Then 
$\zone$ gets \emph{invalidated} due to $\UpTempCons$ after time
$\styleclock{x} = \ParamLinearTerm$.

\begin{definition}
    Let $\zone$ be a zone and $\UpTempCons = \styleclock{x} \sim \ParamLinearTerm$
    an upper (resp. lower) temporal constraint of $\zone$.
    The \emph{temporal} upper (resp. lower) \emph{bound} associated with $\UpTempCons$
    and denoted $\UpTempBound{\zone}{ \UpTempCons}$
    (resp. $\LoTempBound{\zone}{\LoTempCons}$) is the zone obtained by replacing $
    \styleclock{x} \leq \ParamLinearTerm$ (resp. $\styleclock{x} \geq \ParamLinearTerm$)
    by $\styleclock{x} = \ParamLinearTerm$ in $\zone$ if $\UpTempCons$ is non-strict, or in $\UpTempClos{\zone}$ otherwise.
\end{definition}

Note that the bound associated with a strict constraint $\styleclock{x}
< \ParamLinearTerm$ is necessarily outside the zone $\zone$, since it satisfies
$\styleclock{x} = \ParamLinearTerm$, whereas for a non-strict constraint $\styleclock{x}
\sim \ParamLinearTerm$ it is necessarily within $\zone$.
The temporal bounds of a zone are the union of the temporal bounds of its constraints.
The proof of \cref{th:2} is in \cref{app:proofs4}.

\begin{definition} 
    Let $\zone$ be a zone. The upper (resp. lower) \emph{temporal bound} of $\zone$
    is the union of the upper (resp. lower) temporal bounds of its temporal upper
    (resp. lower) constraints $\UpTempCons$.

\hfill
    $\UpTempGlobBound{\zone} = \underset{\UpTempCons \in \UpTempConsSet(\zone)}{\bigcup} \UpTempBound{\zone}{\UpTempCons}$.
\hfill
    $\LoTempGlobBound{\zone} = \underset{\LoTempCons \in \LoTempConsSet(\zone)}{\bigcup} \LoTempBound{\zone}{\LoTempCons}$.
\hfill
\mbox{}
\end{definition}
\begin{restatable}{theorem}{temporalBoundZone}
\label{th:2}
    The upper \emph{(lower)} temporal bound of zone $\zone$ corresponds to:
\begin{itemize}[noitemsep, topsep=0pt]
\item    $\UpTempGlobBound{\zone}=\{ \val + \delay_{\mathit{sup}} \mid \val \in \zone \text{
    such that } \delay_{\mathit{sup}} =  \sup \{ \delay \in \mathbb{R}_{\geq 0} \mid \val + \delay \in \zone \} \in \mathbb{R}_{\geq 0} \}$.
\item $\LoTempGlobBound{\zone}=\{ \val - \delay_{\mathit{sup}} \mid \val \in \zone \text{
    such that } \; \delay_{\mathit{sup}} =  \sup \{ \delay \in \mathbb{R}_{\geq 0} \mid \val - \delay \in \zone \} \in \mathbb{R}_{\geq 0} \}$.
\end{itemize}
\end{restatable}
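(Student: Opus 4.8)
\emph{Proof sketch.} The whole proof rests on a single observation about time elapse: if $\val=(\val_\ClockSet,\val_\ParamSet)\in\zone$, then letting time pass cannot falsify a lower temporal constraint (the clock value only grows), a diagonal constraint (a difference of two clocks is unchanged by an elapse) or a parametric constraint (it mentions no clock); only the \emph{upper} temporal constraints of $\zone$ restrict the elapse. I would first isolate this as a lemma: for $\val\in\zone$, the set $\{\delay\geq 0\mid\val+\delay\in\zone\}$ equals $\{\delay\geq 0\mid \delay\sim \val_\ParamSet(\ParamLinearTerm)-\val_\ClockSet(\styleclock{x})\text{ for every upper temporal constraint }(\styleclock{x}\sim\ParamLinearTerm)\text{ of }\zone\}$. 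Hence $\delay_{\mathit{sup}}:=\sup\{\delay\mid\val+\delay\in\zone\}$ is finite precisely when $\UpTempConsSet(\zone)\neq\emptyset$, and then $\delay_{\mathit{sup}}=\min\{\val_\ParamSet(\ParamLinearTerm)-\val_\ClockSet(\styleclock{x})\mid(\styleclock{x}\sim\ParamLinearTerm)\in\UpTempConsSet(\zone)\}\geq 0$; I call the constraints achieving this minimum the \emph{active} constraints of $\val$. The theorem is then proved by double inclusion.

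For $\UpTempGlobBound{\zone}\supseteq\{\val+\delay_{\mathit{sup}}\mid\dots\}$, fix $\val\in\zone$ with $\delay_{\mathit{sup}}$ finite and pick an active constraint $c=(\styleclock{x}\sim\ParamLinearTerm)$. If some active constraint is \emph{strict}, take $c$ strict; then $\val_\ClockSet(\styleclock{x})<\val_\ParamSet(\ParamLinearTerm)$, so $\delay_{\mathit{sup}}>0$, and I would check $\val+\delay_{\mathit{sup}}\in\UpTempBound{\zone}{c}$ constraint by constraint through $\UpTempClos{\zone}$ with $\styleclock{x}\leq\ParamLinearTerm$ replaced by $\styleclock{x}=\ParamLinearTerm$: this new equality holds by activeness of $c$; each remaining (now non-strict) upper constraint holds since $\delay_{\mathit{sup}}$ is the minimum of the corresponding bounds; each (now strict) lower constraint, including the implicit $\styleclock{x}>0$, holds because $\delay_{\mathit{sup}}>0$; diagonal and parametric constraints are untouched. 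If instead every active constraint is non-strict, then no constraint is violated at the minimum, so $\val+\delay_{\mathit{sup}}\in\zone$; since it also satisfies $\styleclock{x}=\ParamLinearTerm$ for the chosen active $c$, it lies in $\UpTempBound{\zone}{c}$, which here is just $\zone$ with $c$ forced to equality.

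For the reverse inclusion, take $w=(w_\ClockSet,w_\ParamSet)\in\UpTempBound{\zone}{c}$ for some upper temporal constraint $c=(\styleclock{x}\sim\ParamLinearTerm)$ of $\zone$. If $\sim$ is non-strict, then $\UpTempBound{\zone}{c}\subseteq\zone$ and $w$ satisfies $\styleclock{x}=\ParamLinearTerm$, so $\delay_{\mathit{sup}}(w)=0$ (any positive elapse breaks $c$) and $w=w+\delay_{\mathit{sup}}(w)$ is in the right-hand side. If $\sim$ is strict, then by definition of $\UpTempClos{\zone}$, $w$ satisfies $\styleclock{x}=\ParamLinearTerm$, satisfies every other upper temporal constraint non-strictly, and satisfies every lower temporal constraint strictly. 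Choose $\epsilon>0$ below every slack $w_\ClockSet(\styleclock{y})-w_\ParamSet(\ParamLinearTerm')$ of $w$ against the lower temporal constraints $(\styleclock{y}\sim'\ParamLinearTerm')$ of $\zone$, in particular $\epsilon<w_\ClockSet(\styleclock{y})$ for every clock $\styleclock{y}$, and set $\val:=w-\epsilon$. Then $\val\in\zone$ (upper constraints now hold with slack; lower constraints and clock non-negativity hold by the choice of $\epsilon$; diagonals and parametrics are unaffected) and $\delay_{\mathit{sup}}=\epsilon$, since $\styleclock{x}<\ParamLinearTerm$ with $\val_\ClockSet(\styleclock{x})=\val_\ParamSet(\ParamLinearTerm)-\epsilon$ forbids elapsing by $\epsilon$ or more, while every smaller delay keeps all constraints satisfied. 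Hence $w=\val+\delay_{\mathit{sup}}$ lies in the right-hand side, and the two inclusions give the claimed equality. The statement for $\LoTempGlobBound{\zone}$ follows by the time-reversal symmetry $\val\mapsto\val-\delay$, the implicit lower bounds $\styleclock{x}\geq 0$ guaranteeing that the backward supremum is always finite.

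The delicate step, and the one needing the most care, is the strict case of the reverse inclusion: one has to exhibit a genuine valuation of $\zone$ at backward distance \emph{exactly} $\epsilon$ from $w$ that meets no \emph{other} upper temporal constraint before $\epsilon$, which works precisely because $\UpTempClos{\zone}$ has turned the lower temporal constraints into strict inequalities and thus left a strictly positive margin to move backwards. One also has to dispatch the degenerate case where $\zone$ already contains an equality constraint $\styleclock{x}=\ParamLinearTerm$: such a constraint is at once an upper and a lower temporal constraint, so $\UpTempClos{\zone}$ and the associated temporal bound are empty, and the inclusions hold vacuously in that case.
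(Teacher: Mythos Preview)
Your proof is correct and follows essentially the same architecture as the paper's: isolate that only upper temporal constraints bound forward elapse, write $\delay_{\mathit{sup}}$ as a minimum over those constraints, and prove both inclusions by the same strict/non-strict case split on the active constraint. The one technical difference is in the strict case of the reverse inclusion: the paper invokes its preceding lemma characterising $\UpTempClos{\zone}$ via an increasing sequence $(\val_n)$ in $\zone$ converging to $w$, and then takes $\val=\val_0$; you instead directly set $\val=w-\epsilon$ for a sufficiently small $\epsilon$ chosen below all lower-constraint slacks, which is a bit more self-contained and avoids the auxiliary sequence lemma. Both arguments exploit exactly the same fact---that in $\UpTempClos{\zone}$ the lower constraints have been made strict, leaving a positive margin to step backwards---so the difference is one of packaging rather than of idea.
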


\subsection{Algorithm Handling Forced Transitions}
To solve the PTG, we add so-called
$\ForcedMoves$ from states where the invariant breaks and only uncontrollable transitions are available. The controller
can use $\ForcedMoves$ to claim the victory, since the 
environment didn't respect the forced-move semantics.
We use the temporal upper bounds to make this precise.

\begin{example}
    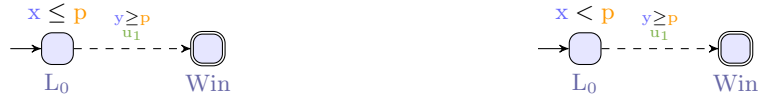
\begin{figure}[b]
    \vspace{-2em}
    \begin{subfigure}{.4\textwidth}\centering
          \begin{tikzpicture}[>=stealth', node distance=3cm,initial text=]
              \node[location,initial left,label={below:\styleloc{L_0}},label={above:$\styleclock{x} \leq \styleparam{p}$}] (l0) {};
              \node[location,accepting,label={below:\styleloc{Win}},right of=l0, node distance=2cm] 
              (win) {};
              \draw[->,dashed] (l0) -- node [above] 
              {$\substack{\styleclock{y} \geq \styleparam{p} \\
                  \styleact{u_1}
              }$} (win);
          \end{tikzpicture}
          \caption{PTG with a forced transition
          \label{fig:FT1}}
      \end{subfigure}
      \hfill
      \begin{subfigure}{.45\textwidth}\centering
        \begin{tikzpicture}[>=stealth', node distance=3cm,initial text=]
            \node[location,initial left,label={below:\styleloc{L_0}},label={above:$\styleclock{x} < \styleparam{p}$}] (l0) {};
            \node[location,accepting,label={below:\styleloc{Win}},right of=l0, node distance=2cm] 
            (win) {};
            \draw[->,dashed] (l0) -- node [above] 
            {$\substack{\styleclock{y} \geq \styleparam{p} \\
                \styleact{u_1}
            }$} (win);
        \end{tikzpicture}
        \caption{PTG without forced transitions
        \label{fig:FT2}}
    \end{subfigure}
      \caption{Examples of PTG with and without forced transitions}
      \label{fig:FT}
      \end{figure}
    
    \Cref{fig:FT} shows two PTGs with initial state $( \styleloc{L_0}, \styleclock{x}=\styleclock{y}=0)$. In both cases,
    the temporal upper bound of the invariant is $\styleclock{x} = \styleparam{p}$.
    In \cref{sub@fig:FT1}, this bound can be reached, so the uncontrollable transition $\styleact{u_1}$ can be forced, and the game is winning. 
    In \cref{sub@fig:FT2}, this bound cannot be reached, so $\styleact{u_1}$ cannot be forced, and the game is losing.
In \cref{sub@fig:FT1}, since the constraint $\styleclock{x} \leq \styleparam{p}$ is non-strict, we intersect the bound directly with the guard of $\styleact{u_1}$. We get $\ForcedMoves$ equals $\styleclock{x}=\styleparam{p} \land \styleclock{y} \geq \styleparam{p}$.
In \cref{sub@fig:FT2}, since the constraint $\styleclock{x} < \styleparam{p}$ is strict, we intersect the bound of the invariant $\styleclock{x} = \styleparam{p}$ with the \emph{temporal upper closure} $\styleclock{y} > \styleparam{p}$ of the guard of $\styleact{u_1}$. We get $\ForcedMoves$ equals $\styleclock{x}=\styleparam{p} \land \styleclock{y} > \styleparam{p}$ (which cannot be reached).
\end{example}

The blue boxes in \cref{alg:main} handle the forced transitions.
During the exploration (\textsc{explore} in \cref{alg:main}), $\ForcedMoves$
are collected for the processed symbolic state $\SymbState{}$. First, at \cref{alg:line:explore:ft:Uguard,alg:line:explore:ft:Cguard},
the guards associated with controllable and uncontrollable transitions available in
$\SymbState{}$ are collected. Then, function $\UTempSplit{}(\Inv(\SymbState.\loc))$
(\cref{alg:line:explore:ft:split}) computes the upper temporal bounds, grouped as
being inside the zone or outside it. For both of these sets, we save the part that has available uncontrollable transitions and no controllable transitions in $\ForcedMoves[\SymbState]$ (\cref{alg:line:explore:ft:fmin},\cref{alg:line:explore:ft:fmout}). The computation with the temporal bounds outside the zone is done with the closed (un)controllable sets (\cref{alg:line:explore:ft:fmout}).
In procedure \textsc{update}, at \crefrange{alg:line:update:ft:begin}{alg:line:update:ft:end},
the forced moves do not correspond to controllable transitions.
Therefore, the strategy of the controller must be to wait until the environment is
forced to move.

\section{Controller Synthesis}
\label{sec:ContSynth}

We now synthesize a controller automaton that
enforces a given winning strategy, without restricting uncontrollable actions. A method for constructing a controller from a strategy specification was previously proposed in~\cite{DBLP:conf/formats/DavidFLZ14}. However, it is based on the \Uppaal strategy semantics, so it exhibits the same issues described in~\cref{sec:intro}. We take inspiration from the proposed structure adapting it to our 
strategy specification, using a new concept of $\epsilon$-bounds.
We assume that for all $\state\in\StateSet{}$ there exists at most one transition labelled
$a\in\LabelSet$.

\subsection{Controller synchronization}

A controller is a PTA with its own locations, clocks and parameters. It may read the values of the 
clocks and parameters with the PTG it controls, but it may not change
them (\ie{} not reset clocks).
When synchronised with the original PTG, it will thus restrict its behaviour.
The controller has its own discrete transitions that may share a label with transitions of the controlled PTA. 

\begin{definition}[controller]
    A \emph{controller} $\cont$ for a Parametric Timed Game
    $\game = (\LocSet^{\game} , \ClockSet^{\game}, \ParamSet^{\game}, \LabelSet^{\game},
    \TransSet_c^{\game}, \TransSet_u^{\game}, \loc_0^{\game}, \Inv^{\game})$, with
    $\TransSet{}^{\game}=\TransSet_c^{\game} \sqcup \TransSet_u^{\game}$,
    is a PTA $(\ClockSet^{\cont}, \ParamSet^{\cont}, \LocSet^{\cont}, \LabelSet^{\cont}, \TransSet^{\cont}, \loc^{\cont}_0)$ such that $\cont$ has no clock reset on $\ClockSet^{\cont} \cap \ClockSet^{\game}$.
\end{definition}

In the parallel composition of the controller PTA and the controlled PTG, a transition
with a shared label $a$ 
occurs simultaneously in both automata.
Transitions with a non-shared label
occur independently.

\begin{definition}
    The \emph{parallel composition}
    $\ParallelComp{\cont}{\game}$ 
    of a controller and its PTG is a labelled transition system where:
    \begin{itemize}[noitemsep, topsep=0pt]
    \item The set of states $\StateSpace^{\ParallelComp{\cont}{\game}}$ is the subset
    of
    $
    \LocSet^{\cont} \times 
    \LocSet^{\game} \times 
    \ValSet(
        \ClockSet^{\cont} \cup \ClockSet^{\game},
        \ParamSet^{\cont} \cup \ParamSet^{\game}
    )$ such that,
    for all $(\loc^{\cont} ,\loc^{\game}, \val) \in \StateSpace^{\ParallelComp{\cont}
    {\game}}$, $\val\models\Inv(\loc^{\game}) \land \Inv(\loc^{\cont})$.
    
    \item It is equipped with 3 types of transitions:
    \begin{itemize}
        \item[-] Timed transitions:
            for $\delay \in \DelaySet, \; (\loc^{\cont} ,\loc^{\game},
            \val ) \temptrans (\loc^{\cont} ,\loc^{\game}, \val' )$
            iff $\val' = \val + \delay$.

        \item[-] Internal discrete transitions:
            For $a \in \LabelSet^{\cont} \setminus \LabelSet^{\game}$,
                $(\loc^{\cont}_0, \loc^{\game}_0, \val) \to^{a}
                    (\loc^{\cont}_1, \loc^{\game}_0, \val')$ iff 
                    $\exists\trans^{\cont} = (\loc^{\cont}_0, \guard, a, Y, \loc^
                        {\cont}_1)\in \TransSet^{\cont}$
                    such that $\val\models\guard$ and
                        $\val' = \val [Y:=0]$.
            For $a \in \LabelSet^{\game} \setminus \LabelSet^{\cont}$,
                $(\loc^{\cont}_0, \loc^{\game}_0, \val) \to^{a}
                    (\loc^{\cont}_0, \loc^{\game}_1, \val')$ iff 
                    $\exists\trans^{\game} = (\loc^{\game}_0, \guard, a, Y, \loc^
                        {\game}_1)\in \TransSet^{\game}$
                    such that $\val\models\guard$ and
                        $\val' = \val [Y:=0]$.

        \item[-] Parallel discrete transitions:
            For $a \in \LabelSet^{\cont} \cap \LabelSet^{\game}$,
            $(\loc^{\cont}_0 ,\loc^{\game}_0, \val ) \to^{a}
                (\loc^{\cont}_1 ,\loc^{\game}_1, \val' )$ iff
                $\exists\trans^{\cont} = (\loc^{\cont}_0, \guard^{\cont}, a,
                    Y^{\cont}, \loc^{\cont}_1)\in \TransSet^{\cont} \land
                    \exists\trans^{\game} = (\loc^{\game}_0, \guard^{\game}, a,
                        Y^{\game}, \loc^{\game}_1) \in \TransSet^{\game}$
                such that $\val\models\guard^{\cont} \land \guard^{\game}$, and 
                    $\val' = \val[Y^{\cont} \cup Y^{\game}:=0]$.
    \end{itemize}
\end{itemize}
\end{definition}

\subsection{$\epsilon$-Lower Temporal Bound}

For an instruction $(\SymbState_1, \SymbState_2, \trans)$ the controller should force
the transition $\trans$ to occur in $\SymbState_2$.
We use an upper bound of $\SymbState_2$ as an invariant to force the transition to
happen before leaving $\SymbState_2$. In case $\SymbState_2$ does not have an upper
bound, we create one, modelled using a new parameter $\styleparam{\epsilon}$: the
transition should then be taken within an $\epsilon$ delay after $\SymbState_2$ is
reached.

\begin{definition}
    Let $\zone$ be a zone in $\mathcal{Z}(\ClockSet,\ParamSet)$ represented by the formula $\ZoneFormula$ and $\LoTempCons = x \sim plt$ a lower temporal constraint of $\zone$.
   The \emph{$\epsilon$-temporal lower bound} $\LoTempBoundEpsilon{\zone}{\LoTempCons}$
   is the zone in $\mathcal{Z}(\ClockSet, \ParamSet \cup \{\epsilon\})$ defined as
   $\phi \land x \leq plt + \epsilon$.
\end{definition}

Function \textsc{AddEpsilonBounds} in \cref{alg:preprocess} describes the modification
of the strategy by
adding instructions
for zones without upper temporal constraints.
For an instruction $(\SymbState_1, \SymbState_2, \trans_c)$ such that $\SymbState_2$
has no upper temporal constraint (condition on \cref{alg:line:pre:noupper}),
the instruction is split between the $\epsilon$-lower temporal bounds of
the zone of $\SymbState_2$.
For every $\epsilon$-lower temporal bound $\LoTempBoundEpsilon{\zone}{\LoTempCons}$
(their union is computed by $\mathit{EpsilonLTempBound}$ at
\cref{alg:line:pre:bounds}),
a new instruction is created from the intersection of its predecessors with $\SymbState_1$
(\cref{alg:line:pre:addepsilon}).
When $\SymbState_2$ can be entered directly in a state located after a delay $\styleparam{\epsilon}$
from its lower bound, an instruction to take the transition immediately (if not interrupted
by the environment) is added (\cref{alg:line:pre:immediate}).

\subsection{Synthesizing the controller}

Function \textsc{ControllerSynthesis} in \cref{alg:ContSynt} considers all strategy
instructions one by one from the augmented list. All actions are initially
part of the controller (if not created by the algorithm, their synchronisation
is not possible and thus they are deactivated in the game).
In \cref{alg:line:synth:urgent1}, if the location of
the instruction does not exist in the controller yet, it is created.
This location is made urgent, \ie{} a location where a discrete action
should be taken without letting time elapse.
Then a location $q_i$ is created that can only be reached from $\SymbState_1$ (\crefrange{alg:line:synth:instrq}{alg:line:synth:instrt}).
When $\decision$ is a controllable action, the invariant of the new state $q_i$  matches
the predecessors of the symbolic state that should be reached according to the strategy,
and the target location is created if necessary, as well as the transition for applying
the strategy (\crefrange{alg:line:synth:instrinv} {alg:line:synth:instract}).
If $\SymbState{}_1$ is included in $\SymbState{}_2$ and $\SymbState{}_2$ has no temporal upper bound, then the instruction was added by the pre-processing (\cref{alg:line:pre:immediate}) where the associated transition should be taken immediately. Thus, location $q_i$ is made urgent
(\crefrange{alg:line:synth:noup}{alg:line:synth:noupurgent}).
Finally, all uncontrollable transitions from $\SymbState{}.\loc$ are inserted in the
controller to allow environment moves (\crefrange{alg:line:synth:uncont}{alg:line:synth:uncontt}).

\begin{algorithm}[h!]
    \caption{Algorithm for controller synthesis from a strategy list. Contains the method to force transition with no upper temporal bound as well} \label{alg:ContSynt}\label{alg:preprocess}
    \begin{algorithmic}[1]
    \Function{AddEpsilonBounds}{\PartListAlg}
    \State $\PartListAlg' := \emptyset$
    \For {$(\SymbState_1, \SymbState_2, \decision) \in \PartListAlg$}
        \If{ ($\decision \neq \Wait$) and $\lnot \mathit{HasUpperTempCons}
        (\SymbState_2.\zone)$}\label{alg:line:pre:noupper}
            \For { $\SymbState_2'$ in the union of zones $\mathit{EpsilonLTempBound}
            (\SymbState_2)$}\label{alg:line:pre:bounds}
                \State $\PartListAlg' \gets \PartListAlg' \cup \{(\SymbState_1 \cap
                (\TempPred{\SymbState_2'}), \SymbState_2', \decision)\}$
                \label{alg:line:pre:addepsilon}
            \EndFor
            \If{$\SymbState_1 \cap \SymbState_2 \neq \emptyset$}
            \State $\PartListAlg' \gets \PartListAlg' \cup \{(\SymbState_1 \cap \SymbState_2,
            \SymbState_2, \decision)\}$
            \label{alg:line:pre:immediate}
            \EndIf 
        \Else \ $\PartListAlg' \gets \PartListAlg' \cup \{(\SymbState_1, \SymbState_2,
        \decision)\}$
            \label{alg:line:pre:copy}
        \EndIf
    \EndFor
    \State \Return $\PartListAlg'$
    \EndFunction

    \medskip
    \Function{ControllerSynthesis}{\PartListAlg} 
    \State $\LabelSet^\cont := \LabelSet^\game$
    \For {$i = (\SymbState_1, \SymbState_2, \decision) \in \Call{AddEpsilonBounds}
    {\PartListAlg}$}
        \State \textbf{if} $\ControllerLoc_{\SymbState_1.\loc}$ does not exist
            \textbf{then} create urgent $\ControllerLoc_{\SymbState_1.\loc}$ 
            \label{alg:line:synth:urgent1}
            \State Create $\ControllerLoc_{i}$
                \Comment{New location for the instruction}
                \label{alg:line:synth:instrq}
            \State Add transition from $\ControllerLoc_{\SymbState_1.\loc}$ to
                $\ControllerLoc_{i}$ with guard $\SymbState_1.\zone$, no clock reset
                \label{alg:line:synth:instrt}

            \If{ $\decision \neq \Wait$}
                \Comment{The decision is a controllable transition}
                \State Match $\decision$ with $(\loc, \_, a, \_, \loc')$
                \State Set $\ControllerLoc_{i}$ invariant to $\TempPred{\SymbState_2}$
                \label{alg:line:synth:instrinv}
                \State \textbf{if} $\ControllerLoc_{\loc'}$ does not exist
                    \textbf{then} create urgent $\ControllerLoc_{\loc'}$ 
                \State Add transition from $\ControllerLoc_{i}$ to $\ControllerLoc_
                {\loc'}$ with guard $\SymbState_2.\zone$, no clock reset, label
                $a$.
                \label{alg:line:synth:instract}
                \If{ $\SymbState_1 \subseteq \SymbState_2$ and $\lnot HasUpperTempCons
                (\SymbState_2.\zone)$}
                \label{alg:line:synth:noup}
                \State Set $\ControllerLoc_{i}$ as urgent
                \label{alg:line:synth:noupurgent}
            \EndIf
            \EndIf

            \For{$\trans_u = (\SymbState_1.\loc, \_, a, \_, \loc')$ uncontrollable
            transition from $\SymbState_1.\loc$}
                \label{alg:line:synth:uncont}
                \State \textbf{if} $\ControllerLoc_{\loc'}$ does not exist
                    \textbf{then} create urgent $\ControllerLoc_{\loc'}$
                \State Add transition from $\ControllerLoc_{i}$ to $\ControllerLoc_
                {\loc'}$ with guard $\top$, no clock reset, label $a$.
                \label{alg:line:synth:uncontt}
            \EndFor
    \EndFor
    \State \Return the controller
    \EndFunction
    \end{algorithmic}
  \end{algorithm}

  \begin{example}
  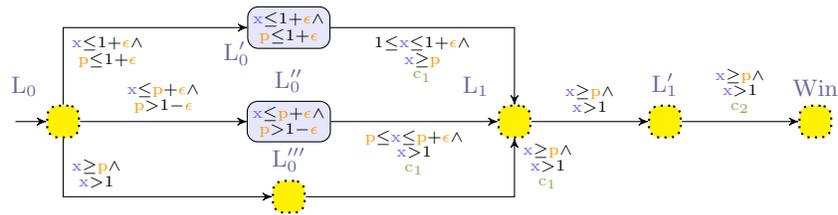
\begin{figure}[b]
    \vspace{-1.5em}
    \centering
    \begin{tikzpicture}[>=stealth', node distance=3cm,initial text=]
        \node[location,urgent,initial left,label={above left:\styleloc{L_0}}] (l0) {};
        \node[location,label={above:\styleloc{L_0''}},right of=l0,
            node distance=3cm] (l02) {
                $\substack{
                    \styleclock{x} \leq \styleparam{p} + \styleparam{\epsilon} \land \\
                 \styleparam{p} > 1 - \styleparam{\epsilon}\\
                }$};
        \node[location,label={[label distance=-.15cm]190:\styleloc{L_0'}},above of=l02,
            node distance=1.25cm] (l01) {$\substack{
                \styleclock{x}\leq 1+\styleparam{\epsilon}  \land \\
                \styleparam{p}\leq 1+\styleparam{\epsilon} \\
                }$};
        \node[location,urgent, label={above:\styleloc{L_0'''}},below of=l02,
            node distance=1cm] (l03) {};

        \node[location, urgent, label={above left:\styleloc{L_1}}, right of=l02, 
            node distance =3cm] (l1) {};

        \node[location, urgent, label={above:\styleloc{L_1'}}, right of=l1, 
            node distance =2cm] (l10) {};

        \node[location, urgent, label={above:\styleloc{Win}}, right of=l10, 
        node distance =2cm] (win) {};

        \draw[->] (l0) |- node[below right] {$\substack{
                \styleclock{x}\leq 1+\styleparam{\epsilon}  \land \\
                \styleparam{p}\leq 1+\styleparam{\epsilon} \\
                }$} (l01);

        \draw[->] (l0) -- node[above] {
            $\substack{
                \styleclock{x} \leq \styleparam{p} + \styleparam{\epsilon} \land \\
             \styleparam{p} > 1 - \styleparam{\epsilon}\\
            }$} (l02);

        \draw[->] (l0) |- node[above right] {            
            $\substack{
                \styleclock{x} \geq \styleparam{p} \land \\
            \styleclock{x} > 1
            }$} (l03);

        \draw[->] (l01) -| node[below, near start] {
            $\substack{
                1 \leq \styleclock{x} \leq 1 + \styleparam{\epsilon} \land \\
                \styleclock{x}\geq \styleparam{p}\\
                \styleact{c_1}
                }$
        } (l1);

        \draw[->] (l02) -- node[below] {
            $\substack{
                \styleparam{p} \leq \styleclock{x} \leq \styleparam{p} + \styleparam{\epsilon} \land \\
                \styleclock{x} > 1\\
                \styleact{c_1}
                }$
        } (l1);

        \draw[->] (l03) -| node[above right] {
            $\substack{
                \styleclock{x} \geq \styleparam{p} \land \\    
                \styleclock{x} > 1\\
                \styleact{c_1}
                }$
        } (l1);

        \draw[->] (l1) -- node[above] {
            $\substack{
                \styleclock{x} \geq \styleparam{p} \land \\        
                \styleclock{x} > 1
                }$
        } (l10);

        \draw[->] (l10) -- node [above] {
            $\substack{
                \styleclock{x} \geq \styleparam{p} \land \\     
                \styleclock{x} > 1\\
                \styleact{c_2}
            }$
        } (win);
    \end{tikzpicture}
    \caption{Controller generated for~\cref{fig:generation_ex_1}. 
    Yellow circles indicate urgent locations.
    \label{fig:generation_ex_2}}
\end{figure}

  The most interesting aspect of the game in~\cref{fig:generation_ex_1} happens in
  the controller generation. Notice that in the strategy we can wait as long as we
  want in $\styleloc{L_0}$ and $\styleloc{L_1}$. Thus, a controller must at some point
  force us to move, which is where the epsilon parameter operates. See~\cref{fig:generation_ex_2}
  for the generated controller. 
  
  From $\styleloc{L_0}$ there are three choices generated from the corresponding strategy entry in~\cref{tab:strategy} due to the pre-processing step described by~\cref{alg:preprocess}. The transition to $\styleloc{L_0'}$ can only be taken if $\styleclock{x} \geq 1$ happens after $\styleclock{x} \geq \styleparam{p}$. The opposite is true for $\styleloc{L_0''}$. $\styleloc{L_0'''}$ is added by the pre-processing in case we arrive in $\styleloc{L_0}$ after we have passed the two boundaries in which case we can continue immediately. Hence, $\styleloc{L_0'''}$ is also made urgent.
  The guards of the transitions from $\styleloc{L_0'}$ and $\styleloc{L_0''}$ along with their invariants ensure that the action $\styleact{c_1}$ is fired at the appropriate time according to the strategy. After arriving in the mirror location $\styleloc{L_1}$, no pre-processing is actually necessary because of a small optimization in the implementation: for a strategy entry $(\SymbState_1,\SymbState_2, \decision)$ if $\SymbState_1\subseteq \SymbState_2$ then it is left untouched. Since we can take the action and stay winning, we do not need to worry about finding bounds. 
\end{example}

\subsection{Properties of the controller}

The controller $\cont$ generated by \Cref{alg:ContSynt} with a strategy
specification $\PartList$ as input satisfies the following properties, as long as
all states of a history $\hist$ match an instruction of $\PartList$ and for $\styleparam{\epsilon} > 0$:
$\cont$ does not cause the run to end prematurely;
$\cont$ does not prevent uncontrollable transitions from happening;
$\cont$ restricts controllable discrete transitions to match the strategy $\PartList$.
Thus, as proved in \cref{app:proofs5}:

\begin{restatable}[Controller Correctness]{theorem}{correctnessController}
    Let $\game$ be a PTG with a reachability objective. Let $\PartList$ be the strategy
 generated by \Cref{alg:main}. Let $\cont$ be the controller synthesized by \Cref{alg:ContSynt} from $\PartList$. Then all runs of 
    $\ParallelComp{\cont}{\game}$ starting in an initial state with a winning parameter valuation and $\styleparam{\epsilon} > 0$ eventually reach a target location. 
\end{restatable}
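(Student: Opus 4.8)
The plan is to reduce the statement to the correctness of the strategy specification (the \emph{correct strategy} theorem above) by a simulation/projection argument relating $\ParallelComp{\cont}{\game}$ to $\game$. Fix an initial state $(\loc_0^{\cont},\loc_0^{\game},\val_0)$ of $\ParallelComp{\cont}{\game}$ whose valuation restricted to $\ParamSet^{\game}$ is winning and with $\styleparam{\epsilon}>0$, and let $\rho$ be a maximal run from it (for a non-maximal run the statement is read as: $\rho$ extends to one visiting $\TargetSet$). Define the \emph{game projection} $\pi(\rho)$ by reading off, along $\rho$, the game location and the restriction of each valuation to $\ClockSet^{\game}\cup\ParamSet^{\game}$, deleting the internal controller transitions (those whose label is not shared with $\game$, namely the entry edges $\ControllerLoc_{\loc}\to\ControllerLoc_{i}$ of \cref{alg:line:synth:instrt}) and summing the delays between two consecutive shared transitions into one timed step. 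The first step is to check that $\pi(\rho)$ is a run of $\game$: every state of $\rho$ satisfies $\Inv(\loc^{\game})$ and the game location is constant between shared transitions, so the accumulated delay respects the game invariant; each shared transition carries a game transition whose guard and reset hold by definition of the parallel composition; and $\pi$ maps the chosen initial state to a winning initial state of $\game$. It then suffices to prove that $\pi(\rho)$ is coherent with $\strat_{\PartList}$ and satisfies the forced transition constraint, since the \emph{correct strategy} theorem then gives that $\pi(\rho)$ visits $\TargetSet$, and $\pi$ preserves game locations, so $\rho$ visits $\TargetSet$ too.

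Coherence and non-blocking follow from the three structural properties of $\cont$ stated just before the theorem, which I would first establish by inspecting \cref{alg:ContSynt}. (i) The entry edge $\ControllerLoc_{\loc}\to\ControllerLoc_{i}$ is guarded by $\SymbState_1.\zone$ with $\SymbState_1=\src{i}$ (\cref{alg:line:synth:instrt}), so $\rho$ can be in $\ControllerLoc_{i}$ only while the current game state matches $i$; moreover, since every game state reached under $\strat_{\PartList}$ from a winning initial state is itself winning and the synthesized instruction list covers the winning region (a fact underlying the \emph{correct strategy} theorem), the urgent location $\ControllerLoc_{\loc}$ (\cref{alg:line:synth:urgent1}) always has an enabled outgoing edge, so $\rho$ never deadlocks there. (ii) For an instruction $i=(\SymbState_1,\SymbState_2,\trans_c)$ the location $\ControllerLoc_{i}$ has invariant $\TempPred{\SymbState_2}$ (\cref{alg:line:synth:instrinv}) and a single controllable outgoing edge, carrying the action of $\trans_c$ and guarded by $\SymbState_2.\zone$ (\cref{alg:line:synth:instract}); hence while in $\ControllerLoc_{i}$ time elapses only within $\TempPred{\SymbState_2}$ and the only controllable game transition $\rho$ can fire is $\trans_c$, from a game state lying in $\SymbState_2$ --- exactly a decision of $\dec{i,\state}$ for the current game state $\state$, hence of $\strat_{\PartList}(\hist)$. (iii) Every uncontrollable transition out of $\SymbState_1.\loc$ is present from $\ControllerLoc_{i}$ with guard $\top$ (\cref{alg:line:synth:uncontt}), so $\rho$ never blocks an uncontrollable game transition; and since any such transition is taken while still inside $\TempPred{\SymbState_2}$, its delay is at most the delay of the controllable decision of $i$, matching the third clause of the definition of a coherent run. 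Combining (i)--(iii), every game step of $\pi(\rho)$ is admitted by \cref{def:strat:instr}, so $\pi(\rho)$ is coherent; and if $\rho$ is finite maximal, then property (a) (the controller does not end a run prematurely) forces the endpoint of $\pi(\rho)$ to be a game state that $\strat_{\PartList}$ lets idle --- either no instruction matches, giving $\infty$, or a $\Wait$ instruction matches --- and that satisfies the forced transition constraint.

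The delicate point, and the reason $\styleparam{\epsilon}>0$ is needed, is \emph{progress} inside an instruction location $\ControllerLoc_{i}$ for $i=(\SymbState_1,\SymbState_2,\trans_c)$: one must show that when the invariant $\TempPred{\SymbState_2}$ is about to expire, the guard $\SymbState_2.\zone$ of the controllable outgoing edge already holds, so $\rho$ continues with a genuine game transition; then an infinite $\rho$ yields an infinite $\pi(\rho)$ (which satisfies the forced transition constraint by being infinite), and a finite maximal $\rho$ cannot be stuck in $\ControllerLoc_{i}$. When $\SymbState_2$ carries an upper temporal constraint, this is immediate from \cref{th:2}: the temporal upper bound of $\TempPred{\SymbState_2}$ lies inside $\SymbState_2$, where the edge is enabled. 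When $\SymbState_2$ has no upper temporal constraint, the pre-processing \textsc{AddEpsilonBounds} has replaced the instruction by instructions whose target is an $\epsilon$-lower temporal bound $\SymbState_2'=\SymbState_2\land x\le plt+\styleparam{\epsilon}$ (for the lower constraint $x\sim plt$ of $\SymbState_2$), together with the ``immediate'' instruction of \cref{alg:line:pre:immediate}, which \cref{alg:line:synth:noupurgent} makes urgent; now $\TempPred{\SymbState_2'}$ does carry the upper temporal bound $x\le plt+\styleparam{\epsilon}$, and because $\styleparam{\epsilon}>0$ the bound value $x=plt+\styleparam{\epsilon}$ strictly satisfies $x\sim plt$, hence lies in $\SymbState_2'.\zone$, so the controllable edge is enabled there --- with $\styleparam{\epsilon}=0$ this fails and the controller can deadlock. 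I expect this progress argument, together with verifying that the $\epsilon$-augmented instruction list still covers every game state matched by the original one (so that (i) keeps applying), to be the main obstacle; the projection and the coherence bookkeeping are routine once the structure produced by \cref{alg:ContSynt} is spelled out, and the proof closes by applying the \emph{correct strategy} theorem to $\pi(\rho)$.
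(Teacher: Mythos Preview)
Your approach is correct and essentially the same as the paper's: project the composed run onto the game, argue that the projection is coherent with $\strat_{\PartList}$, verify that the controller neither blocks uncontrollable moves nor deadlocks (which is where $\styleparam{\epsilon}>0$ enters), and then invoke the \emph{correct strategy} theorem. The paper organizes these ingredients into four auxiliary lemmas (location correspondence, invariant-blocking characterization, coherence, non-blocking of uncontrollable transitions) rather than an explicit projection map, but the content matches your points (i)--(iii) and your progress argument almost exactly; in particular, the paper's treatment of the $\epsilon$-bound case in its second lemma mirrors your analysis of why the controllable edge is enabled at the temporal upper bound of $\TempPred{\SymbState_2'}$ when $\styleparam{\epsilon}>0$.
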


\section{Implementation and Experimental Evaluation}
\label{sec:expe}
The algorithm proposed by~\cite{OTF-TG} was extended with parameters and implemented
by \cite{dahlsenjensen2024ontheflyalgorithmreachabilityparametric} along with some optimizations. The implementation is an extension to \imitator model checker~\cite{DBLP:conf/cav/Andre21}, supporting a wide set of PTA synthesis algorithms. 

\imitator allows the user to specify a model which consists of parameters, clocks,
a network of parametric timed automata and (for PTG) a partitioning of actions into
controllable and uncontrollable. The winning parameters for a PTG can then be synthesized by querying with property \texttt{Win} to select
\texttt{AlgoPTG}. Thus, 
\texttt{property := \#synth Win(state\_predicate)}
synthesizes parameters for a PTG,
where \texttt{state\_predicate} defines winning states. Predicate \texttt{accepting},
captures all states with an accepting location.

We extended the PTG parameter synthesis algorithm (\texttt{AlgoPTG}) of~\cite{dahlsenjensen2024ontheflyalgorithmreachabilityparametric} and the source code 
can be found on GitHub\footnote{
    \url{https://github.com/imitator-model-checker/imitator}, branch: develop%

}. Our extensions are threefold:
\begin{enumerate}[noitemsep, topsep=0pt]
    \item Forced uncontrollable action semantics (\cref{sec:FTS})
    \item  Strategy synthesis (\cref{sec:InstrList})
    \item Controller synthesis (\cref{sec:ContSynth})
\end{enumerate}

\subsection{Verification}\label{sec:verification}
In order to verify that a controller actually works, we can use \imitator again. The
input game (\cref{fig:generation_ex_1}) and the generated controller (\cref{fig:generation_ex_2})
are combined in a single \imitator model with synchronized actions $\styleact{c_1}$
and $\styleact{c_2}$. By querying with $\texttt{AF Accepting}$, we synthesize the parameters
for which all runs of the composition reach an accepting state. If the controller
is indeed correct, this should reproduce the results of the winning parameter synthesis
algorithm (plus 
$\styleloc{\epsilon} > 0$). 

While \imitator has a tailored algorithm for unavoidability synthesis ($\texttt{AF}$), 
this query can also be solved by viewing the parallel composition as a game, in which all actions are uncontrollable. This game is won by the controller if and only if the goal is unavoidable.
We have found that the algorithm proposed in this paper is faster than the tailored AF-synthesis.
This is used in our experiments to confirm the generated controllers are
correct.

\subsection{Experiments}
We evaluate the scalability on the \textit{Production Cell} case study~\cite{ProdCellCaseStudy}. It has two conveyor belts (incoming/ exiting), a press and a robot with arms $A$/$B$ (\cref{fig:productioncell} shows a visualization). The arms are coupled, moving in unison between the incoming conveyor belt, the press and the exiting conveyor belt. The incoming belt transports unprocessed plates to a buffer zone below arm A. Once the preceding plate has been processed at the press, arm A transfers the unprocessed plate to the press, while arm B transfers the processed plate to the exiting belt. 

We model this in \imitator with 1-4 plates. The reachability goal for the $n$ plate experiment is that all $n$ plates make it to the exiting conveyor belt. More precisely, no two plates must be at the buffer zone simultaneously. If this happens the game is immediately lost via a trap location.
There are several good candidates for parameters: the min/max waiting time between
consecutive plates, the min/max time it takes for the robot to rotate and the processing
time of a plate. For our experiments, we have one parameter for the min waiting time,
while all other quantities are constant. The exact timing of plates arriving at the
incoming conveyor belt and the robot arms movement is uncontrollable, although the environment must respect the constants and min waiting time parameter. 

The PTG model is largely inspired by the one in~\cite{dahlsenjensen2024ontheflyalgorithmreachabilityparametric}.
The only key difference is that with the new forced uncontrollable semantics, the
modelling becomes simpler - there is no need for auxiliary controllable actions to
simulate the environment respecting invariants.
See Appendix~\ref{app:prodcell} for the 1-plate \imitator model.

\begin{figure}
    \centering
    \scalebox{0.7}{
    \begin{tikzpicture}
        \tikzstyle{processed}=[preaction={fill, white}, pattern=crosshatch, pattern color=purple!70!blue, rounded corners];
        \tikzstyle{unprocessed}=[preaction={fill, white},pattern=north west lines, pattern color=blue]
    
        \draw[->,very thick] (-1,.75) -- ++ (.75,0);
        \node at (-1.5,.75) {};
        \draw[->,very thick] (-.25,5.25) -- ++ (-.75,0);
        \draw[rounded corners, pattern={Lines[angle=90,distance=13]},pattern color=gray] (0,0) rectangle ++(6,1.5);
        \draw [rounded corners, pattern={Lines[angle=90,distance=13]},pattern color=gray] (0,4.5) rectangle ++(6,1.5);
    
        \draw (6.5, 2.25) rectangle ++ (1.5,1.5);
        \draw[very thick] (5,3) circle (.75);
        \draw[very thick,fill=white] (4.25,.25) rectangle ++ (1.5,1);
    
        \begin{scope}
            \clip (0,0) rectangle ++(6,1.5);
            \draw[unprocessed] (-.5,.5) rectangle ++ (1,.5);
        \end{scope}
        \begin{scope}
            \clip (0,5) rectangle ++(6,1.5);
            \draw[processed] (-.5,5) rectangle ++ (1,.5);
        \end{scope}
        \draw[unprocessed] (3.5,.5) rectangle ++ (1,.5);
        \draw[processed] (7, 2.5) rectangle ++ (.5,1);
        \draw[processed] (3, 5) rectangle ++ (1,.5);

        \draw[fill=white] (4.75,.5) rectangle ++(.5,2.75);
        \draw[fill=white] (4.75,2.75) rectangle ++(2.75,.5);

            \draw[-,dashed] (.5,.5)--(.5,-.5);
            \draw[-,dashed] (3.5,.5)--(3.5,-.5);
            \draw[<->] (.5,-.5)--(3.5,-.5);
            \node at (2,-.75) {$p \dots \textsc{max\_interval}$};
        
            \node (x) at (7.75,4) {$\textsc{min\_rot} \dots \textsc{max\_rot}$};
            \draw[dashed] (4.75,2.75) rectangle ++(.5,2.75);
            \begin{scope}
                \clip (5.25,3.25) rectangle ++(1.25,1.25);
                \draw[dashed] (5,3) circle (1.25);
                \draw[->] (5.30,4.25)--(5.25,4.25);
                \draw[->] (6.25,3.30)--(6.25,3.25);
            \end{scope}

        \node at (5,.75) {$A$};
        \node at (7.25,3) {$B$};

    \end{tikzpicture}}
    \caption{Production Cell case study. \label{fig:productioncell}}
    \end{figure}
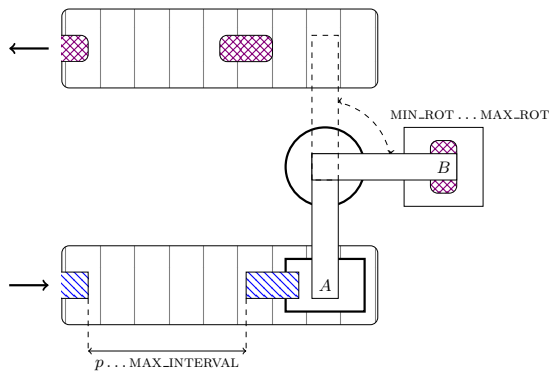

\subsection{Experimental Results}

\begin{table}[!!b]
    \caption{\label{experiment-table}\footnotesize
     Experimental results for 1-4 plates. "P" refers to parameter synthesis (original
     algorithm), while "P+C" also includes
     strategy construction
     with controller synthesis (our contribution).
      Values in parentheses
     indicate the percentage \wrt{}
     the "P" row.
      }

    \footnotesize\medskip\centering
    \begin{tabular}{|c|c||c|c|c|p{.5cm}|c|c|}
            \multicolumn{5}{c}{Solving Game} & \multicolumn{1}{c}{\hspace{1cm}} &
            \multicolumn{2}{c}{Controller Verification}\\
            \cline{1-5} \cline{7-8} 
            Plates & Version & P. Synth. & C. Synth. & Size & & P. Synth (\%) & Size (\%) \\
            \hhline{|=|=#=|=|=|~|=|=|}
            \multirow{2}{*}{1} & P+C & 0.080s & 0.029s & 73 & &\multirow{2}{*}{0.074s (158\%)} &
            \multirow{2}{*}{95 (130\%)} \\
            \cline{2-5} 
            & P & 0.047s& \cellcolor{gray} & 73 & & &  \\
            \cline{1-5} \cline{7-8}
            \multirow{2}{*}{2} & P+C & 0.68s & 0.20s & 496 & & \multirow{2}{*}{0.49s (127\%)} &
            \multirow{2}{*}{378 (76\%)} \\
            \cline{2-5} 
            & P & 0.39s& \cellcolor{gray} & 496 & & & \\
            \cline{1-5} \cline{7-8}
            \multirow{2}{*}{3} & P+C & 2.46s & 0.34s & 1167 & & \multirow{2}{*}{0.46s (26\%)} &
            \multirow{2}{*}{241 (21\%)} \\
            \cline{2-5} 
            & P & 1.76s& \cellcolor{gray} & 1167 & & & \\
            \cline{1-5} \cline{7-8}
            \multirow{2}{*}{4} & P+C & 149s & 2.89s & 7477 & & \multirow{2}{*}{3.33s (5.6\%)} &
            \multirow{2}{*}{703 (9.4\%)} \\
            \cline{2-5} 
            & P & 60s& \cellcolor{gray} & 7477 & & & \\
            \cline{1-5} \cline{7-8}
        \end{tabular}
\end{table}
Results are in \cref{experiment-table} and
experimental details in \cref{app:exp_setup}.
Strategy construction has a 250\% slowdown in the 4-plate experiment, revealing an overhead in strategy construction.
Controller synthesis is very fast, indicating that the majority of the computational
effort occurs during the main algorithm for strategy construction, while controller translation requires minimal additional resources.

Verification speeds up relatively as the model grows. Due to the way the controller is constructed, it always has more discrete locations than the input model, but when exploring the parallel composition, the symbolic state space quickly becomes smaller. This is expected, as the controller restricts the availability of controllable actions, directing the exploration more efficiently.

Synthesis times are 4000\% smaller than those in~\cite{dahlsenjensen2024ontheflyalgorithmreachabilityparametric} (P column). Beyond hardware/library/implementation differences, 
the key conceptual difference is that forced uncontrollable
transitions make the model smaller and avoid re-computations of winning moves.

\section{Conclusion}
\label{sec:concl}
This paper proposes a method to compute a controller for a parametric timed reachability game. The controller is represented by a parametric timed automaton, to be synchronized with the game.
To this end, we introduced a new notion of strategy specifications, which allow for planned actions to happen in a specified time interval. We also updated the semantics of timed games, to properly take care of forced environment transitions.
We demonstrate experimentally that controller synthesis is feasible. Most extra time is spent on generating the strategy specification while solving the game on-the-fly. Generating the controller from the strategy is fast. The experiment also demonstrates that the composition of the game with the controller is guaranteed to reach the goal.

\paragraph{Data Availability.} The models, scripts, and tools to reproduce our experimental evaluation are archived and publicly available at~\cite{dahlsen_jensen_2025_15284734}. 

\bibliographystyle{plain}
\bibliography{bibliography.bib}

\newpage
\appendix

\section*{Appendix}
\section{Proof for Section~\ref{sec:InstrList}: Computed Strategy is Winning}
\label{app:proofs3}

\strategyWinning*
\begin{proof}
    Recall that a state matches an instruction $(\SymbState, \ldots)$ if it is an element of $\SymbState$.
    We will number the instructions of the synthesized strategy specification in the order in which they were added to the strategy specification.
    We will first prove the following invariants:

\medskip
\noindent\textbf{All states in $\Win$ match some instruction in the strategy specification.}

Indeed, states are added to $\Win$ either at the end of an update or when discovering a target location. During an \emph{update}, we add $\NewWin$ to $\Win$ on \cref{alg:line:update:addnewwin} if it contains new winning states. $\NewWin$ is initially empty and accumulates states from $\NewWin_i$ that are added to the strategy simultaneously 
(\crefrange{alg:line:update:pair1a}{alg:line:update:pair1b} and \crefrange{alg:line:update:pair2a}{alg:line:update:pair2b}).
During an \emph{exploration}, if a symbolic state $\SymbState$ matches the target location, we add it to $\Win$, as well as to the strategy specification (\crefrange{alg:line:explore:winupdate}{alg:line:explore:instrlist}).

\medskip

\noindent\textbf{All states match at most one instruction of the strategy specification.}

Before we add an instruction to the list, we remove from $\NewWin_i$ the states that match any other instruction of the strategy specification
(\crefrange{alg:line:update:remove_overlapsA}{alg:line:update:remove_overlapsB}). Therefore, a state can only ever match one instruction of the strategy specification.

\medskip
    
Using the second invariant, we can extend the numbering to states. For all states $\state$, let $\ind(\state)$ denote the index of the instruction $\instr \in \PartList$ matching $\state$, if it exists, or $\ind(\state) = \infty$ otherwise.

We now proceed to prove the property.
Let $\state_0$ be a winning initial state found by \cref{alg:main} and let $\run =
\state_0 \state_1 \ldots$ be a run coherent with the controller strategy associated
with the strategy specification $\PartList$ generated by \cref{alg:main}.
We will prove by induction that $\forall i \in \mathbb{N}$, if $\state_i$ exists in $\run$, and $\ind(\state_i)$ is finite, then either $s_i$ matches the target location, or $s_{i+1}$ exists in $\run$ and $\ind(s_i) > \ind(s_{i+1})$.
    
\textbf{Basis:} Since $\state_0$ is found winning, it enters $Win$, and therefore it matches an instruction of $\PartList$ and $\ind(\state_i)$ is finite.

\textbf{Step:} Let $\state_i$ be a state of the run such that $\ind(\state_i)$ is finite and $\state_{i}$ doesn't match a target location.
Then
$\state_i$ matches an instruction $\instr$ from the strategy specification $\PartList$, by the first invariant. 
Since $\state_i$ does not match a target location, $\instr$ is an instruction of the
form $(\SymbState, \SymbState', \trans_c)$ added during an update. Let $\SymbState_{up}$ be the symbolic state of the zone graph being updated.
Let $\Win_t$ be the winning states gathered by the algorithm at the beginning of the update where $\instr$ was added to $\PartList$. 
We define the uncontrollable states computed at the beginning of the update as
\[\Uncontrollable_t :=\hspace{-1em}\underset{\{(\SymbState', t_u) \mid
    \trans_u \in \TransSet_u \,\land\,\SymbState' =\TempSucc{Succ(\trans_u, \SymbState)}\}}
    {\bigcup} \hspace{-1em}
          Pred (t_u,\SymbState' \setminus \mathit{Win_t}[\SymbState'])\] 
    We have that $\SymbState'$ is a convex zone, $\SymbState' \subseteq Pred(\trans_c, Win_t) \setminus \Uncontrollable_t$ and $\SymbState \subseteq \SafePred(\SymbState', \Uncontrollable_t)$.
    Since $\state_i$ doesn't match a waiting instruction, the decision to wait indefinitely isn't available to the controller. So a discrete transition will be taken and $\state_{i+1}$ exists.
    We distinguish if the next transition that was taken is controllable or uncontrollable.
    
    If a controllable action is taken next, the action is $\trans_c$, taken from $\SymbState' \subseteq Pred(\trans_c, Win_t)$. Therefore, $\state_{i+1} \in Win_t$. From there, $\state_{i+1}$ was found winning before the current update and matches an instruction added to $\PartList$ before the update. As a result, $\ind(\state_{i+1}) < \ind(\state_{i})$. 
    
    Else, an uncontrollable action is taken next. 
    Let $(\delta_u, \trans_u) \in (\mathbb{R}_{\geq 0} \times \TransSet_u)$ such that $\state_i \to^{\delta_u} \state_{\delta_u} \to^{\trans_u} \state_{i+1}$. Define
    \begin{align*}
        \delay_{\mathit{inf}} &= \inf \{ \delay \in \mathbb{R}_{\geq 0} \; | \; \exists
        \state_ {\delay} \in \SymbState' \; \state_i \to^{\delay} \state_{\delay} \}\\
        \delay_{\mathit{sup}} &= \sup \{ \delay \in \mathbb{R}_{\geq 0} \; | \; \exists
        \state_{\delay} \in \SymbState' \; \state_i \to^{\delay} \state_{\delay} \}\enspace.
    \end{align*}
    
We must have $\delay_u \leq \delay_{\mathit{sup}}$, else $\trans_c$ would have been
taken. \jvdp{I don't get this. Why is some $\trans_c$ enabled here?}
Next, we show that $\state_{\delay_u} \notin \Uncontrollable_t$
by distinguishing two cases:
\begin{itemize}
\item    If $\state_{\delay_u} \in \SymbState'$, note that $\SymbState' \subseteq Pred(\trans_c, Win_t) \setminus \Uncontrollable_t$, so indeed $\state_{\delay_u} \notin \Uncontrollable_t$.
\item    
    Else, $\state_{\delay_u} \notin \SymbState'$. Then $\delay_u \leq \delay_{\mathit{inf}}$
    because $\SymbState'$ is convex. We cannot reach $\SymbState'$ from $\state_i$
    without going through $\state_{\delay_u}$. But $\state_i \in \SymbState \subseteq \SafePred (\SymbState', \Uncontrollable_t)$. Therefore, $\state_{\delay_u} \notin \Uncontrollable_t$.
\end{itemize}

Since $\Uncontrollable_t \supseteq Pred (\trans_u, \SymbState_{\trans_u} \setminus
    \Win_t[\SymbState_{\trans_u}])$, we have $\state_{i+1} \in \Win_t$. From there, $\state_{i+1}$ was found winning before the current update, and matches an instruction added to $\PartList$ before the update. As a result, $\ind(\state_{i+1}) < \ind(\state_{i})$. 
    
    So, following the strategy specification from any state in $\Win$ will result in a run with a strictly decreasing $\ind(\state_i)$, that does
    not terminate until the target is reached. Since $\ind(\state_i)$ takes values in the positive integers, a target location is always reached.
\qed
\end{proof}

\section{Proofs for Section~\ref{sec:FTS}: Correctness Forced Transitions}

\label{app:proofs4}

\subsection{Correctness of upper/lower Temporal Closure}
\begin{lemma}
    The upper (resp. lower) temporal closure $\UpTempClos{\zone}$ (resp. $\LoTempClos{\zone}$) of a zone $\zone$ is the set of valuations $\val$ such that there exists
    a sequence of valuations $(\val_n)_{n\in \mathbb{N}}$ in $\zone$ such that:
    For all $n \in \mathbb{N}$, there exists a \emph{strictly} positive delay $\delay_n > 0$ such that $\val_n \to^{\delay_n} \val_{n+1}$ (resp. $\val_{n+1} \to^{\delay_n} \val_n$) and $\val$ is the limit of the sequence $(\val_n)_{n\in \mathbb{N}}$.

\end{lemma}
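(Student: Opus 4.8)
The plan is to prove the two inclusions separately for the upper temporal closure; the lower case is entirely symmetric, reversing the direction of time elapse and swapping the roles of upper and lower temporal constraints. Throughout, I write $\zone$ as a finite conjunction $\zone = \bigwedge_i \ZoneFormula_i$ \emph{including} the implicit lower constraints $x \geq 0$, and I use that all valuations $\val_n$ in such a sequence share one parameter valuation (time elapse leaves parameters untouched), so every linear term $plt$ has a fixed value that I identify with. The one structural observation I rely on is: if $\val_{n+1} = \val_n + \delta_n$ with $\delta_n > 0$, then $\val_n = \val_0 + T_n$ with $T_n := \sum_{k<n}\delta_k$ \emph{strictly} increasing; and convergence $\val_n \to \val$ forces $T_n \uparrow T := \val(x) - \val_0(x)$ for any clock $x$, with $T > 0$ and $T > T_n$ for every $n$, hence $\val = \val_0 + T$.

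For the inclusion $\supseteq$ (every $\val \in \UpTempClos{\zone}$ is such a limit): I claim there is $\epsilon_0 > 0$ with $\val - \epsilon \in \zone$ for all $0 < \epsilon \leq \epsilon_0$, and then $\val_n := \val - \epsilon_0 2^{-n}$ is a sequence in $\zone$ with $\val_n \to^{\,\epsilon_0 2^{-(n+1)}} \val_{n+1}$ (strictly positive delays) and limit $\val$. To prove the claim I check each constraint of $\zone$ against $\val - \epsilon$: diagonal constraints $x - y \sim plt$ and pure parameter constraints are unchanged by the shift and already satisfied by $\val$ (they occur verbatim in $\UpTempClos{\zone}$); an upper temporal constraint $x < plt$ or $x \leq plt$ of $\zone$ appears in the closure as $x \leq plt$, so $\val(x) \leq plt$ and hence $(\val - \epsilon)(x) < plt$ for every $\epsilon > 0$; a lower temporal constraint $x > plt$ or $x \geq plt$ (and the implicit $x \geq 0$) appears in the closure \emph{strictly} as $x > plt$, so $\val(x) - plt > 0$, whence $(\val - \epsilon)(x) \geq plt$ as soon as $\epsilon \leq \val(x) - plt$; taking $\epsilon_0$ to be the minimum of these finitely many positive slacks proves the claim. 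If $\zone$ contains an equality temporal constraint $x = plt$, then $\UpTempClos{\zone}$ is unsatisfiable and the inclusion is vacuous; likewise if $\zone = \emptyset$.

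For the inclusion $\subseteq$ (every such limit lies in $\UpTempClos{\zone}$): take a sequence $(\val_n)$ in $\zone$ with strictly positive delays $\delta_n$ and limit $\val = \val_0 + T$, where $T \geq T_1 = \delta_0 > 0$ and in fact $T > T_1$. I verify each constraint of $\UpTempClos{\zone}$ at $\val$: diagonal and parameter constraints hold because $\val(x) - \val(y) = \val_0(x) - \val_0(y)$ is the constant value satisfied, with the same strictness, by every $\val_n$; an upper constraint became $x \leq plt$ and $\val(x) = \lim_n \val_n(x) \leq plt$ since $\val_n(x) \leq plt$ for all $n$; a lower constraint became the strict $x > plt$, and here strict positivity of the delays is essential, since $\val(x) = \val_0(x) + T > \val_0(x) + T_1 = \val_1(x) \geq plt$ (the same computation with $plt$ replaced by $0$ handles the implicit lower constraints).

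I expect the only delicate points to be bookkeeping: getting the strictness right when an original constraint is strict versus non-strict, remembering to carry the implicit $x \geq 0$ constraints (they are exactly what makes the closure "open from below"), dispatching the degenerate equality-constraint and empty-zone cases by vacuity, and — in the $\subseteq$ direction — pinpointing precisely where $\delta_n > 0$ rather than $\delta_n \geq 0$ is used, namely in the strict inequality $T > T_1$ that upgrades the lower bounds to strict ones.
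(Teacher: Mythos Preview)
Your argument is correct and mirrors the paper's proof closely: in one direction you back off from $\val$ by a geometrically vanishing amount governed by the minimum positive slack in the lower constraints, and in the other you pass each constraint type to the limit, using the strictly positive accumulated delay to recover strictness of the lower bounds. One small bookkeeping slip of exactly the kind you anticipated: with $\epsilon_0$ equal to the minimum slack itself, $\val_0 = \val - \epsilon_0$ may land \emph{on} a strict original lower constraint (you only derive $(\val-\epsilon)(x)\geq plt$, not $>$); taking half the minimum, or indexing as $\val_n = \val - \epsilon_0\,2^{-(n+1)}$ as the paper does, avoids this.
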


\begin{proof}
    Let $\zone$ be a zone represented by the formula $\ZoneFormula$.
    Let $\val$ be a valuation of $\UpTempClos{\zone}$, the upper temporal closure of $\zone$. Then we have
    for all upper temporal constraints $\UpTempCons \in \UpTempConsSet(\zone)$ of the form $\styleclock{x} \sim \ParamLinearTerm$, that $\val$ satisfies the constraint $\styleclock{x} \leq \ParamLinearTerm$.
    Similarly,
    for all lower temporal constraints $\LoTempCons \in \LoTempConsSet(\zone)$ of the form $\styleclock{x} \sim \ParamLinearTerm$, we have that $\val$ satisfies the constraint $\styleclock{x} > \ParamLinearTerm$.
    For all diagonal constraints $\phi$, $\val$ satisfies $\phi$.

    We prove the two directions of the property.
    For the first direction, define $$\delay_{\mathit{sup}} = \underset{(\styleclock{x}
    \sim \ParamLinearTerm) \in \LoTempConsSet}{\max} \val(\styleclock{x}) - \val(\ParamLinearTerm)\enspace.$$
    Since $\zone$ has a finite number of lower temporal constraints and for each temporal
    constraint, $\val(\styleclock{x}) - \val(\ParamLinearTerm) > 0$, we have that
    $\delay_{\mathit{sup}} > 0$.

    We will now prove for all $n \in \mathbb{N}$, that $\val_n = \val - 2^{-(n+1)}
    \delay_{\mathit{sup}}$. So let $n \in \mathbb{N}$. We claim that $\val_n$ satisfies
    all constraints of $\zone$, by distinguishing three types of constraints:
\begin{itemize}
\item
    Note that $\val_n$ satisfies all diagonal constraints of $\zone$ because $\val$ does, and diagonal constraints are invariant under time progress.
\item
    For all lower temporal constraints $\styleclock{x} \sim \ParamLinearTerm$ of $\zone$, we have  $\val_n(\styleclock{x}) = \val(\styleclock{x}) - 2^{-(n+1)} \delay_{sup} > \val(\styleclock{x}) - \delay_{sup} \geq \val(\ParamLinearTerm) = \val_n(\ParamLinearTerm) $.
    Therefore, $\val_n$ satisfies $\styleclock{x} \sim \ParamLinearTerm$.
\item
    For all upper temporal constraints $\styleclock{x} \sim \ParamLinearTerm$ of $\zone$, we have $\val_n(\styleclock{x}) < \val(\styleclock{x}) \leq \val(\ParamLinearTerm) = \val_n(\ParamLinearTerm)$.
    Therefore, $\val_n$ satisfies $\styleclock{x} \sim \ParamLinearTerm$.
\end{itemize} 
    Thus, for all $n \in \mathbb{N}$, $\val_n \in \zone$, $\val_{n+1} = \val_n + 2^
    {-(n+1)} \delay_{\mathit{sup}}$ with $2^{-(n+1)} \delay_{\mathit{sup}} > 0$. Finally,
    $\val$ is the limit of the sequence $(\val_n)_{n\in \mathbb{N}}$ when $n$ goes to infinity.
\medskip

    Conversely, let $\val$ be a valuation such that there exists a sequence of valuations $(\val_n)_{n\in \mathbb{N}}$ with limit $\val$ and for all $n \in \mathbb{N}$, there exists a delay $\delay_n > 0$ such that $\val_n \to^{\delay_n} \val_{n+1}$.

    Then we have $\val = \val_0 + \delay_{\mathit{sup}}$ where $\delay_{\mathit{sup}} = \sum_{k=0}^{\infty} \delay_n > 0$. We show that this valuation $\val$ satisfies
    all constraints in $\zone$, by distinguishing three types of constraints:
\begin{itemize}
\item
    $\val$ satisfies all diagonal constraints of $\zone$ because $\val_0$ does, and diagonal constraints are invariant under time progress.
\item
    For all lower temporal constraints $\styleclock{x} \sim \ParamLinearTerm$ of $\zone$,
    we have $\val(\styleclock{x}) = \val_0(\styleclock{x}) + \delay_{\mathit{sup}}
    > \val_0 (\styleclock{x}) \geq \val_0(\ParamLinearTerm) = \val(\ParamLinearTerm)$.
    Therefore, $\val$ satisfies $\styleclock{x} > \ParamLinearTerm$.
\item
    For all upper temporal constraints $\styleclock{x} \sim \ParamLinearTerm$ with ${\sim} \in \{ < ; \leq\}$ of $\zone$, for all $\epsilon > 0$, there exists $n \in \mathbb{N}$ such that $\val(\styleclock{x}) - \epsilon < \val_n(\styleclock{x}) \sim \val_n(\ParamLinearTerm) = \val(\ParamLinearTerm)$.
    So $\val(\styleclock{x}) \leq \val(\ParamLinearTerm)$, and $\val$ satisfies $\styleclock{x} \leq \ParamLinearTerm$.
\end{itemize}

\noindent Thus, $\val$ belongs to the temporal upper closure $\UpTempClos{\zone}$ of $\zone$.\qed
\end{proof}

\subsection{Theorem~\ref{thmt@@temporalBoundZone}:
Correctness of Temporal upper/lower Bound}
\temporalBoundZone*
\begin{proof}
    Let $\zone$ be a zone.

    First, assume that $\zone$ has no upper temporal constraint. Then the temporal upper bound of $\zone$ is the empty union, so
    $\UpTempGlobBound{\zone} = \emptyset$.  
    We claim that for all valuations $\val$ in $\zone$ and delays $\delay \in \mathbb{R}_{\geq 0}$, we have $\val+\delay$
    satisfies all constraints in~$\zone$:

\begin{itemize}
\item For all diagonal constraints $\phi$ of $\zone$, $\val + \delay $ satisfies $\phi$ because $\val$ does, and diagonal constraints are invariant under time progress.
\item
    Let $\styleclock{x} \sim \ParamLinearTerm$ be any other temporal constraint of $\zone$. Then ${\sim} \in \{\geq,>\}$, since $\zone$ has no upper temporal constraints. $(\val+ \delay)(\styleclock{x}) \geq \val(\styleclock{x}) \sim \val(\ParamLinearTerm) = (\val + \delay) (\ParamLinearTerm)$. Therefore, $\val+ \delay$ satisfies $\styleclock{x} \sim \ParamLinearTerm$.
\end{itemize}
    Thus, for all valuations $\val$ in $\zone$ and for all delays $\delay \in \mathbb{R}_{\geq 0}$, $\val + \delay $ is in $\zone$.  We have:
    $\UpTempGlobBound{\zone} = \{ \val + \delay_{\mathit{sup}} \mid \val \in \zone \text{
        such that } \delay_{\mathit{sup}} =  \sup \{ \delay \in \mathbb{R}_{\geq 0} \mid \val + \delay \in \zone \} \in \mathbb{R}_{\geq 0} \} = \emptyset$.

    \medskip
    
    Next, we consider the case that $\zone$ has at least one upper temporal constraint. Let $v$ be an arbitrary valuation in $\zone$,
    and let $\UpTempCons \in \UpTempConsSet(\zone)$ 
    be an arbitrary upper temporal constraint of $\zone$ of the form $\styleclock{x} \sim \ParamLinearTerm$. Let $\delay_{\UpTempCons} = \val(\ParamLinearTerm) - \val(\styleclock{x})$. Since $\val$ is in $\zone$, it satisfies $\UpTempCons$ and therefore $\delay_{\UpTempCons} \in \mathbb{R}_{\geq 0}$.
    \medskip

    Let $\delay_{\mathit{sup}} = \underset{\UpTempCons \in \UpTempConsSet(\zone)}
    {\min} \delay_{\UpTempCons}$ and let $\Phi_{\delay_{\mathit{sup}}} \subseteq \UpTempConsSet$
    be the set of temporal upper temporal constrains $\UpTempCons$ of $\zone$ such
    that $\delay_{\mathit{sup}} = \delay_{\UpTempCons}$.
    We claim that for all $0 \leq \delay < \delay_{sup}$, 
    the valuation $\val + \delay$ is in $\zone$:
    \begin{itemize}
\item    $\val + \delay$ satisfies all upper temporal constraints $\UpTempCons$ of
$\zone$ because, for all $\UpTempCons \in \UpTempConsSet(\zone)$, $\delay < \delay_
{\mathit{sup}} \leq \delay_{\UpTempCons}$.
    
\item    $\val + \delay$ satisfies all diagonal constraints of $\zone$ because $\val$ does, and those constraints are invariant under time progress.
    
\item    $\val + \delay$ satisfies all lower temporal constraints of $\zone$ because for all clock $\styleclock{x}$, $(\val+\delay)(\styleclock{x}) \geq \val(\styleclock{x})$. 
    \end{itemize}
    
    \medskip

    Furthermore, for all $\delay > \delay_{\mathit{sup}}$, $\val + \delay$ invalidates
    the constraints in $\Phi_{\delay_{\mathit{sup}}}$ and therefore, $\val + \delay$
    is \emph{not} in $\zone$.
    We thus have that $\delay_{sup} =  \sup \{ \delay \in \mathbb{R}_{\geq 0} \mid \val + \delay \in \zone \}$.

    \medskip
    We will now prove the $\supseteq$-direction of the Property,
    distinguishing two cases.

    First, consider the case that there exists a strict constraint $\UpTempCons \in \Phi_{\delay_{\mathit{sup}}}$. Then it is of the form $ \styleclock{x} < \ParamLinearTerm$ with $\delay_{\mathit{sup}} = \delay_{\UpTempCons}$ and $\val + \delay_{\mathit{sup}}$ satisfies $\styleclock{x} = \ParamLinearTerm$.
    Now $\delay_{\mathit{sup}} > 0$, since otherwise $\val = \val + \delay_{\mathit{sup}}$ and $\val$ satisfies both $ \styleclock{x} < \ParamLinearTerm$ and $\styleclock{x} = \ParamLinearTerm$ (contradiction).

    \begin{itemize}
    \item
    For all upper temporal constraints $\UpTempCons$ of the form $ \styleclock{x} \sim \ParamLinearTerm$ of $\zone$, $(\val + \delay_{\mathit{sup}}) (\styleclock{x}) \leq \val(\styleclock{x}) + \delay_{\UpTempCons} = \val(\ParamLinearTerm)$.
\item
    For all diagonal temporal constraints$\phi$ of $\zone$, $\val+ \delay_{\mathit{sup}}$ satisfies $\phi$ because $\val$ does, and diagonal constraints are invariant under time progress.
\item
    For all lower temporal constraints $\LoTempCons$ of the form $\styleclock{x} \sim \ParamLinearTerm$ of $\zone$, $(\val+\delay_{\mathit{sup}})(\styleclock{x}) > \val(\styleclock{x}) \sim \val(\ParamLinearTerm) = (\val+\delay_{\mathit{sup}})(\ParamLinearTerm)$.
\end{itemize}

    Therefore, $\val+\delay_{\mathit{sup}}$ belongs to the intersection of the zone $\styleclock{x} = \ParamLinearTerm$ and the temporal upper closure of $\zone$. It is in the temporal upper bounds associated with $\UpTempCons$ and thus, in the temporal upper bound of $\zone$.
    \medskip

    Next, consider the case that all constraints in $\Phi_{\delay_{\mathit{sup}}}$
    are non-strict. We claim that $(\val + \delay_{\mathit{sup}})$ satisfies all constraints
    $\phi$ in $\zone$:
    \begin{itemize}
\item
    All constraints $\UpTempCons \in \Phi_{\delay_{\mathit{sup}}}$ such that $\UpTempCons$ is of the form $ \styleclock{x} \leq \ParamLinearTerm$, $(\val + \delay_{\mathit{sup}})$ satisfy $\styleclock{x} = \ParamLinearTerm$, and thus $(\val + \delay_{\mathit{sup}})$ satisfies $\UpTempCons$.
\item
    For all other upper temporal constraints $\UpTempCons$ in $\zone$ of the form $\styleclock{x} \sim \ParamLinearTerm$, $(\val + \delay_{\mathit{sup}})(\styleclock{x}) \leq \val(\styleclock{x}) + \delay_{\UpTempCons} = \val(\ParamLinearTerm) = (\val+ \delay_{\mathit{sup}})(\ParamLinearTerm)$.
\item
    For all diagonal constraints $\phi$ of $\zone$, $\val+ \delay_{\mathit{sup}}$ satisfies $\phi$ because $\val$ does, and diagonal constraints are invariant under time progress.
\item
    For all lower temporal constraints $\LoTempCons$ of the form $\styleclock{x} \sim \ParamLinearTerm$ of $\zone$, $(\val+\delay_{\mathit{sup}})(\styleclock{x}) > \val(\styleclock{x}) \sim \val(\ParamLinearTerm) = (\val+\delay_{\mathit{sup}})(\ParamLinearTerm)$.
\end{itemize}
    
    As a result, $\val+\delay_{\mathit{sup}}$ is in $\zone$.
    Let $\UpTempCons$ be a constraint of $\Phi_{\delay_{\mathit{sup}}}$, $\val+\delay_{\mathit{sup}}$ belong to the upper temporal bound of $\zone$ associated with $\UpTempCons$. Thus, $\val+\delay_{\mathit{sup}} \in \UpTempGlobBound{\zone}$.

    We can now conclude that the set $\{ \val + \delay_{\mathit{sup}} \mid \val \in \zone \text{
        such that } \delay_{\mathit{sup}} =  \sup \{ \delay \in \mathbb{R}_{\geq 0} \mid \val + \delay \in \zone \} \in \mathbb{R}_{\geq 0} \}$ is included in the union of temporal upper bounds associated to temporal upper constraints of $\zone$.
    \medskip

    Conversely, in order to prove the $\subseteq$-direction of the Property, let $\val \in \UpTempGlobBound{\zone}$ be a valuation in the temporal upper bound of $\zone$. Then there exists $\UpTempCons \in \UpTempConsSet$ of the form $\val(\styleclock{x}) \sim \ParamLinearTerm$ such that $\val \in \UpTempBound{\zone}{\UpTempCons}$.   
    
    If $\UpTempCons$ is non-strict, then $\val$ belongs in $\zone$ and $\val(\styleclock{x}) = \val(\ParamLinearTerm)$. 
    Therefore, for all delays $\delay > 0$, $(\val + \delay) (\styleclock{x}) = \val(\styleclock{x}) + \delay > \val(\ParamLinearTerm) = (val + \delay)(\ParamLinearTerm)$.
    Thus, $\val + \delay$ does not satisfy $\UpTempCons$ and is not in $\zone$. 
    We have $ \sup \{ \delay \in \mathbb{R}_{\geq 0} | \val + \delay \in \zone \} = 0$ and $\val = \val + 0$.

    \medskip

    If $\UpTempCons$ is strict, then $\val$ belongs in the temporal upper closure of $\zone$ and $\val(\styleclock{x}) = \val(\ParamLinearTerm)$.
    There exists a sequence $(\val_n)_{n \in \mathbb{N}}$
    of valuations $\val_n$ of $\zone$, such that for all $n \in \mathbb{N}$, there exists a strictly positive delay $\delay > 0$ such that $\val_n \to^{\delay} \val_{n+1}$ and $\val$ is the limit of $(\val_n)_{n \in \mathbb{N}}$.
    
    For all $n \in \mathbb{N}$, there exists a delay $\delay_n$ such that $\val_0
    \to^{\delay_n} \val_n$. The sequence $(\delay_n)_{n \in \mathbb{N}}$ is strictly
    increasing and has a limit $\delay_{\mathit{sup}} \in \DelaySet$ such that $\val_0
    \to^{\delay_{\mathit{sup}}} \val$.
    For all $\delay < \delay_{\mathit{sup}}$, there exists $n \in \mathbb{N}$ such
    that $\delay < \delay_n < \delay_{\mathit{sup}}$. Since $\val_n$ is in $\zone$,
    we have, by convexity of $\zone$, that $\val + \delay$ belongs to $\zone$ as well.
    
    Furthermore, for all finite delays $\delay\geq\delay_{\mathit{sup}}$, we have $
    (\val_0 + \delay)(\styleclock{x}) = \val_0(\styleclock{x}) + \delay \geq \val_0
    (\styleclock{x}) + \delay_{\mathit{sup}} = (\val_0 + \delay_{\mathit{sup}}) (\ParamLinearTerm) = \val_0(\ParamLinearTerm)$. Therefore, $\val + \delay$ does not satisfy the constraint $\styleclock{x} < plt$ which is the constraint $\UpTempCons$ of $\zone$. Thus, $\val+\delay$ is not in $\zone$.
    
    Thus, we have $\val_0 \in \zone$ and $\delay_{\mathit{sup}} = \sup \{ \delay \in \mathbb{R}_{\geq 0} | \val_0 + \delay \in \zone \}$ such that $\val = \val_0 + \delay_{\mathit{sup}}$.

    We can now conclude that $\UpTempGlobBound{\zone} = \underset{\UpTempCons \in \UpTempConsSet}{\bigcup} \UpTempBound{\zone}{\UpTempCons}$.
\qed
\end{proof}

\newpage

\section{Proofs for Section~\ref{sec:ContSynth}: Correctness of the Controller}
\label{app:proofs5}

Let $\run$ be a run of the parallel composition $\ParallelComp{\cont}{\game}$ with a parameter valuation $\val$ such that $\val(\epsilon) > 0$.
\begin{lemma}
For all states $(\loc^{\cont}, \loc^{\game}, \val)$ of $\run$, $\loc^{\cont}$ is the
mirror location of $\loc^{\game}$ or it is an instruction location $\loc^{\instr}$
where $\instr = (\SymbState_1, \_ , \_ ) $ and $\loc^{\game} = \SymbState_1.\loc$.
\end{lemma}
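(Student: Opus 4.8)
The plan is to prove the statement as an invariant of the parallel composition $\ParallelComp{\cont}{\game}$, by induction on the index $i$ of the state $s_i$ along the run $\run = s_0 s_1 s_2 \ldots$, using that each step $s_i \temptrans s \disctrans s_{i+1}$ consists of a delay followed by a discrete transition and that a delay changes neither the control location nor the game location. For the base case, the first state of $\run$ is $(\loc_0^{\cont},\loc_0^{\game},\val)$, and by the construction in \cref{alg:ContSynt} the initial control location $\loc_0^{\cont}$ is the mirror location $\ControllerLoc_{\loc_0^{\game}}$ of the initial game location (the mirror location of $\loc_0^{\game}$ is created, as urgent, the first time an instruction with source location $\loc_0^{\game}$ is processed on \cref{alg:line:synth:urgent1}), so the first disjunct holds. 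Throughout, I would also note that mirror locations $\ControllerLoc_\loc$ and instruction locations $\ControllerLoc_i$ form disjoint families of control locations, so the two disjuncts never overlap.

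For the inductive step, assume $s_i=(\loc^{\cont},\loc^{\game},\val)$ satisfies the invariant; the delay to the intermediate state $s$ preserves it, so it remains to treat the discrete transition $s \disctrans s_{i+1}$. I would enumerate the three shapes of controller transition produced by \cref{alg:ContSynt}: (i) the internal transition from a mirror location $\ControllerLoc_{\SymbState_1.\loc}$ to an instruction location $\ControllerLoc_i$ added on \cref{alg:line:synth:instrt}, guarded by $\SymbState_1.\zone$; (ii) for an instruction $i=(\SymbState_1,\SymbState_2,\decision)$ with $\decision$ matched to $(\loc,\_,a,\_,\loc')$, the transition $\ControllerLoc_i \to \ControllerLoc_{\loc'}$ carrying the controllable label $a$, added on \cref{alg:line:synth:instract}; and (iii) the transitions $\ControllerLoc_i \to \ControllerLoc_{\loc'}$ carrying an uncontrollable label $a$ of some $\trans_u=(\SymbState_1.\loc,\_,a,\_,\loc')$, added on \cref{alg:line:synth:uncontt}. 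In case (i), $\loc^{\game}$ is unchanged, the induction hypothesis forces $\loc^{\cont}$ to be the mirror of $\loc^{\game}$ before the step (the guard sits on $\ControllerLoc_{\SymbState_1.\loc}$, so it can fire only when $\loc^{\cont}=\ControllerLoc_{\SymbState_1.\loc}$, hence $\loc^{\game}=\SymbState_1.\loc$), and after the step $\loc^{\cont}=\ControllerLoc_i$ with $\src{i}=\SymbState_1$ and $\SymbState_1.\loc=\loc^{\game}$, so the second disjunct holds. In cases (ii) and (iii) the induction hypothesis gives $\loc^{\game}=\SymbState_1.\loc$ whenever $\loc^{\cont}=\ControllerLoc_i$; since this is a parallel transition on the shared label $a$, the game must simultaneously take an $a$-labelled transition from $\loc^{\game}=\SymbState_1.\loc$, and by the standing assumption that each location has at most one transition with a given label this game transition is exactly $\decision$ (resp.\ $\trans_u$) --- here one also uses that, by well-formedness of the strategy specification (\cref{def:strat:instr}: $\decision$ is applicable from every state of $\SymbState_2$ and $\SymbState_1\subseteq\TempPred{\SymbState_2}$), the source location of $\decision$ equals $\SymbState_1.\loc$. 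Hence afterwards $\loc^{\game}=\loc'$ and $\loc^{\cont}=\ControllerLoc_{\loc'}$, so the first disjunct holds.

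The step that needs the most care --- and is the main obstacle --- is showing that the three shapes above are the \emph{only} discrete moves available in a configuration satisfying the invariant, so that the case analysis is exhaustive. This rests on two properties of the synthesized controller: mirror locations are urgent and carry only the internal transitions of shape (i), and $\LabelSet^{\cont}=\LabelSet^{\game}$, so every game transition synchronizes with a controller transition on a shared label; hence no game move is possible from a mirror configuration (and no delay either), while an instruction configuration $\ControllerLoc_i$ offers only the single label of $\decision$ together with the uncontrollable labels out of $\src{i}.\loc$. Since moreover no controller transition enters or leaves a location that is neither a mirror nor an instruction location, the set of reachable control locations stays within the two families described. Once this enumeration is made precise (and the initial-location convention of \cref{alg:ContSynt} is fixed), the induction closes.
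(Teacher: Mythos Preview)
Your proposal is correct and follows essentially the same approach as the paper: both argue by tracking how the controller location and game location evolve together along the run, starting from the initial mirror pair and analysing the internal controller transitions (mirror $\to$ instruction) versus the parallel labelled transitions (instruction $\to$ mirror). Your version is considerably more detailed---making the induction explicit, enumerating the three transition shapes from \cref{alg:ContSynt}, and arguing exhaustiveness via $\LabelSet^{\cont}=\LabelSet^{\game}$---whereas the paper's proof is a brief informal narrative of the same invariant, but the underlying argument is the same.
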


\begin{proof}
    The run starts in the initial game location $\loc_0$ and its mirror in the controller $q_{\loc_0}$. From then on, when a mirror location $q_{\loc}$ is reached in the controller, it is done with a parallel discrete transition with a game transition that leads to some location $\loc$ which is mirrored by $q_{\loc}$. Since there is no internal game transition, the location of the game is unchanged until we leave $q_{\loc}$.
    An instruction location $q_i$ where $\instr = (\SymbState_1, \_ , \_ )$ of the controller
    is reached by an internal transition of the controller from a mirror location $q_{\loc}$ where $\loc = \SymbState_1.\loc$. Since the location of the game matches $\loc$ in $q_{\loc}$, and an internal controller transition does not change the game location, we still have that the location of the game matches $\loc$ when entering $q_{\loc}$. Since there is no internal game transition, the location of the game is unchanged until we leave $q_{\loc}$.
\end{proof}

\begin{lemma}
Run $\run$ invalidates a controller invariant if and only if it ends in a state $\state = (\loc^{\cont}, \loc^{\game}, \val)$ with $\val(\epsilon) = 0$ or when $\loc^{\cont}$ is a mirror location $q_{\loc^{\game}}$ and there is no matching instruction for $(\loc^{\game}, \val)$ in $\PartList$.
\end{lemma}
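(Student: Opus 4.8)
The plan is to prove both implications by a case analysis on the controller location $\loc^{\cont}$ in the last state $\state=(\loc^{\cont},\loc^{\game},\val)$ of $\run$, using the preceding lemma: $\loc^{\cont}$ is either a mirror location $q_{\loc^{\game}}$ or an instruction location $q_i$ for some $i=(\SymbState_1,\SymbState_2,\decision)$ with $\SymbState_1.\loc=\loc^{\game}$. First I would pin down what ``$\run$ invalidates a controller invariant'' means here: $\run$ is finite and, in $\state$, no discrete transition of $\ParallelComp{\cont}{\game}$ is enabled while every strictly positive delay breaks $\Inv(\loc^{\cont})$; so the blocking can only come from an \emph{urgent} controller location, or from a controller location whose invariant has a binding temporal upper constraint at $\val$. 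The two kinds of controller locations thus give the two kinds of potential blocking, and the statement will drop out of analysing them.

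For a mirror location $q_{\loc^{\game}}$, I would use that \cref{alg:ContSynt} always creates it urgent, so time cannot elapse and the only candidate extensions are the internal controller edges to instruction locations $q_j$, each guarded by $\src{j}.\zone$ (with $\src{j}.\loc=\loc^{\game}$). Such an edge legitimately extends $\run$ iff $(\loc^{\game},\val)\in\src{j}$ \emph{and} $\val$ also satisfies $q_j$'s invariant ($\TempPred{\SymbState_2}$, possibly $\epsilon$-augmented, or $\top$ for a $\Wait$ instruction). Here I would prove a short auxiliary claim: for $\val(\epsilon)>0$, over each original instruction $(\SymbState_1,\SymbState_2,\decision)$ of $\PartList$ the instructions produced by $\textsc{AddEpsilonBounds}$ have sources whose union is again $\SymbState_1$ (because the $\epsilon$-lower temporal bounds of $\SymbState_2$, together with $\SymbState_1\cap\SymbState_2$, have temporal predecessors covering $\TempPred{\SymbState_2}\supseteq\SymbState_1$), and that the corresponding $q_j$-invariant is then satisfiable at the entry valuation. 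Combined, this yields: at a mirror location the run is blocked iff $(\loc^{\game},\val)$ matches no instruction of $\PartList$, or $\val(\epsilon)=0$ (in which case an $\epsilon$-augmented target zone, or its temporal predecessor, may be empty, so even a matching $\PartList$-instruction yields no entrable successor); both disjuncts of the statement appear exactly here. Moreover, when $\epsilon=0$ the $\epsilon$-bounds push $\val$ onto a strict boundary, which typically leaves the \emph{next} mirror location with no matching instruction, so the two disjuncts overlap rather than conflict.

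For an instruction location $q_i$ I would split into three sub-cases. If $\decision=\Wait$, $q_i$ carries the trivial invariant and has outgoing (uncontrollable) edges with guard $\top$, so time always elapses and $\Inv(\loc^{\cont})$ is never the obstruction — matching the right-hand side being false (this state is neither a mirror location nor an $\epsilon=0$ state when $\val(\epsilon)>0$). If $\decision=\trans_c$ and $q_i$ is the urgent ``immediate'' location from \cref{alg:line:pre:immediate}, then $\run$ entered $q_i$ with $\val\in\SymbState_1\cap\SymbState_2\subseteq\SymbState_2$ and no time has since passed, so the $\trans_c$-edge, guarded by $\SymbState_2.\zone$ and applicable in the game by \cref{def:strat:instr}, is enabled and extends $\run$. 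If $\decision=\trans_c$ and $q_i$ is non-urgent, its invariant is $\TempPred{\SymbState_2}$ for an $\SymbState_2$ that (after preprocessing) has a temporal upper constraint; using \cref{th:2} I would observe that $\TempPred{\SymbState_2}$ and $\SymbState_2$ have the same temporal upper bound and that, inside $\TempPred{\SymbState_2}$, letting time elapse reaches $\SymbState_2$ and remains in it up to that bound, so the $\trans_c$-edge stays enabled whenever $\val(\epsilon)>0$; whereas for $\val(\epsilon)=0$ the $\epsilon$-augmented $\SymbState_2$ built from a strict lower temporal constraint collapses to the empty zone, which is precisely the first disjunct.

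Assembling the cases gives both directions: the forward direction because the only controller locations whose invariant can block a maximal run are mirror locations (covered by the ``no matching instruction'' disjunct, unless $\val(\epsilon)=0$) and the $\epsilon$-augmented non-urgent instruction locations (blocked only when $\val(\epsilon)=0$); the converse because each listed situation was shown to leave no enabled transition while forbidding any delay. I expect the main obstacle to be the $\epsilon$-bound bookkeeping: proving that with $\val(\epsilon)>0$ every instruction location leaves enough room to fire its transition before its $\TempPred{\SymbState_2}$ invariant expires, that with $\val(\epsilon)=0$ exactly the zones arising from strict lower temporal constraints degenerate, and correctly relating ``matches an instruction of $\PartList$'' to ``matches an instruction of $\textsc{AddEpsilonBounds}(\PartList)$''. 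The urgency arguments for mirror and immediate locations, and the $\Wait$ case, are routine.
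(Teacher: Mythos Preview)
Your case split on the controller location (mirror vs.\ instruction) and your three instruction sub-cases ($\Wait$, urgent ``immediate'', non-urgent with a temporal upper constraint on $\SymbState_2$) are exactly the structure of the paper's proof, and the individual arguments you sketch --- mirror locations are urgent with only instruction-guarded internal edges, $\Wait$ locations carry the trivial invariant, and for the non-urgent case $\TempPred{\SymbState_2}$ lets time flow into $\SymbState_2$ where $\trans_c$ fires --- are the same ones the paper uses.

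Two differences are worth flagging. First, the lemmas in this appendix sit under a standing hypothesis, declared just before them, that the run's parameter valuation has $\val(\epsilon)>0$; so the disjunct $\val(\epsilon)=0$ in the statement is vacuous and the paper's proof never treats it. Your careful analysis of what collapses when $\epsilon=0$ is extra work that the paper simply sidesteps. Second, the paper only argues the contrapositive of the forward direction: at an instruction location, or at a mirror location matching some instruction of $\PartList$ (after showing \textsc{AddEpsilonBounds} preserves coverage for $\epsilon>0$), some discrete transition is enabled and the controller invariant is not the obstruction. It never argues the converse --- that a mirror location with no matching instruction \emph{must} invalidate the invariant --- and indeed only the forward direction is consumed downstream in the proof of the correctness theorem. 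Your plan to establish both directions is sound (the converse is easy: urgent mirror, no instruction-guard satisfied, no internal game transitions since $\LabelSet^\cont=\LabelSet^\game$), but it goes beyond what the paper actually proves.
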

\begin{proof}

    If $\loc^{\cont}$ is an instruction location $q_i$ with $\instr = (\SymbState_1, \SymbState_2 , \decision )$, the associated instruction after processing by \textsc{AddEpsilonBounds}, we proceed as follows.
    
    If $\decision = Wait$, the invariant of $q_i$ is then $\top$ and cannot be invalidated.
    Else $\decision = \trans_c \in \TransSet_c$.
    If $\SymbState_2$ has an upper temporal constraint (possibly because it is an
    epsilon lower bound obtained from \textsc{AddEpsilonBounds}) then $\SymbState_1 \subseteq \TempPred{\SymbState_2}$. All states of $\SymbState_1$ can reach $\SymbState_2$. Furthermore, the invariant of $q_i$ has been set to $\TempPred{\SymbState_2}$. At the temporal upper bound of $\TempPred{\SymbState_2}$ there is a transition available synchronized with the game transition $\decision$. Therefore, a transition is forced, and the invariant does not block it.

    If $\SymbState_2$ has no upper temporal constraint, then it was obtained by \textsc{AddEpsilonBounds}
    from a symbolic state with no upper temporal constraint. We have $\SymbState_1 \subseteq \SymbState_2$. Then, $\state$ has an enabled transition synchronized with the game transition $\trans_c$ and $q_i$ is urgent, so a transition is forced, and the invariant does not block it.

    If $\loc^{\cont}$ is a mirror location and $\state$ matches an instruction from $\PartList$.
    Then let $\instr = (\SymbState_1, \SymbState_2 , \decision )$ be an instruction matched by $\state$.

    If $\decision= \Wait$ or $\SymbState_2$ has an upper temporal constraint, then
    $\instr$ is not modified by \textsc{AddEpsilonBounds}.
    $\loc^{\cont} = q_{\SymbState_1.\loc}$ and $\val$ satisfies $\SymbState_1.\zone$, therefore some internal transition of the controller from $q_{\loc}$ to $q_i$ is enabled in $\state$. 
    Since $q_{\loc}$ is urgent and by the forced transition semantics, a discrete
    transition is forced to occur from $\state = (\loc^{\cont}, \loc^{\game}, \val)$.

    Otherwise, if $d\neq \Wait$,
    the temporal predecessors $\TempPred{\SymbState_2}$ are covered by $\SymbState_2$ and the temporal predecessors of its epsilon lower bounds for all parameter valuation with $\val(\styleparam{\epsilon}) > 0$.
    In this case, $\SymbState_1 $ is split between $\SymbState_2$ and the temporal predecessors of the epsilon lower temporal bounds of $\SymbState_2$. Since $\SymbState_1 \subseteq \TempPred{\SymbState_2}$, all states $\state$ of the parallel composition $\ParallelComp{\cont}{\game}$ with a strictly positive value for $\styleparam{\epsilon}$ matching an instruction of $\PartList$ before \textsc{AddEpsilonBounds}, still match an instruction of the strategy specification obtained with \textsc{AddEpsilonBounds}.

    Thus, they have an internal controller transition from $\loc^{\cont}$, which can be forced instead of breaking the urgent location.
\end{proof}

\begin{lemma}
    Run $\run$ is coherent with the strategy described by $\PartList$ until it is
    blocked by a controller invariant.
\end{lemma}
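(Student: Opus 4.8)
The plan is to trace the run $\run$ through the controller. By the first of the two preceding lemmas, the controller component of every state of $\run$ is a mirror location $q_{\loc}$ or an instruction location $q_{i'}$; by the second preceding lemma, since $\val(\styleparam{\epsilon})>0$, a controller invariant can block $\run$ only at a mirror location for which no instruction matches. First I would project $\run$ onto $\game$: timed transitions project to timed transitions of $\game$; parallel discrete transitions project to the corresponding game transitions, and these are the only game moves in $\run$, since $\LabelSet^{\game}\subseteq\LabelSet^{\cont}$ forces every game action to synchronise and an unsynchronised action is deactivated; and internal controller transitions are erased, since by the controller definition they reset no clock of $\ClockSet^{\cont}\cap\ClockSet^{\game}$, leave the game location fixed and never touch parameters, so the game state agrees on both sides of such a transition. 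Call the projection $\pi(\run)=s_0 s_1\ldots$, a run of $\game$. Since $\strat_\PartList$ is memoryless, it suffices to check, for each index $j$ at which $\pi(\run)$ is defined, that the step $s_j\to s_{j+1}$ --- or the fact that $\pi(\run)$ stops at $s_j$ --- meets one of the three clauses of the definition of a run coherent with $\strat_\PartList$; no induction on $j$ is needed, and the work is to exhibit, for each such step, a witnessing decision of $\strat_\PartList$.

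Next I would set up the local picture. After a parallel discrete transition, or at the start, the controller sits in a mirror location $q_{\loc}$, which is urgent; no time elapses there, and the only enabled moves are the internal edges $q_{\loc}\to q_{i'}$ with guard $\SymbState_1.\zone$, for $i'=(\SymbState_1,\SymbState_2,\decision)\in\textsc{AddEpsilonBounds}(\PartList)$ with $\SymbState_1.\loc=\loc$. If no instruction of $\PartList$ matches $s_j$, then --- using $\val(\styleparam{\epsilon})>0$ and the fact, shown inside the proof of the second preceding lemma, that for $\styleparam{\epsilon}>0$ a state matches an instruction of $\PartList$ iff it matches one of $\textsc{AddEpsilonBounds}(\PartList)$ --- no internal edge is enabled, $\run$ is blocked here by the urgency of $q_{\loc}$, and $\pi(\run)$ ends at $s_j$ with $\infty\in\strat_\PartList(s_0\cdots s_j)$ by the default clause, which is exactly the termination clause. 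Otherwise $s_j$ matches the $\PartList$-instruction $i^{\ast}$ from which the chosen $i'$ derives: the edge $q_{\loc}\to q_{i'}$ forces $\SymbState_1.\zone$, and by construction $\SymbState_1\subseteq\src{i^{\ast}}$ with $\SymbState_1.\zone$ entailing $\src{i^{\ast}}.\zone$, so $s_j\in\src{i^{\ast}}$ and hence $\dec{i^{\ast},s_j}\subseteq\strat_\PartList(s_0\cdots s_j)$. The run then moves instantaneously to $q_{i'}$, and I would split on the decision $\decision$ that $i'$ inherits from $i^{\ast}$. If $\decision=\Wait$, then $q_{i'}$ has invariant $\top$ and only outgoing uncontrollable synchronisations, so $\run$ either elapses time forever --- making $\pi(\run)$ a finite run ending at $s_j$, for which $\infty\in\dec{i^{\ast},s_j}$ gives the termination clause --- or fires an uncontrollable transition after some delay, which meets the interception clause with decision $\infty$.

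The main obstacle is the case $\decision=\trans_c$. Let $\SymbState'$ be the successor zone of $i^{\ast}$; then $\SymbState_2\subseteq\SymbState'$ --- for the $\epsilon$-bound pieces $\SymbState_2.\zone=\SymbState'.\zone\wedge(x\le plt+\styleparam{\epsilon})$, and $\SymbState_2=\SymbState'$ otherwise --- so $\SymbState_2.\zone$ entails $\SymbState'.\zone$, while $q_{i'}$ has invariant $\TempPred{\SymbState_2}$ and outgoing edges synchronised with $\trans_c$ (guard $\SymbState_2.\zone$) and with every uncontrollable transition from $\loc$ (guard $\top$). If $\run$ fires $\trans_c$ after a delay $\delta$, then its guard gives $s_j+\delta\models\SymbState_2.\zone$, hence $s_j+\delta\models\SymbState'.\zone$, \ie $s_j+\delta\in\SymbState'$, so $(\delta,\trans_c)\in\dec{i^{\ast},s_j}\subseteq\strat_\PartList(s_0\cdots s_j)$: the controllable clause. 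If instead $\run$ fires an uncontrollable $\trans_u$ after a delay $\delta$, the delicate point is that the interception clause needs a decision whose delay is at least $\delta$; but $\run$ has stayed in the invariant of $q_{i'}$, so the game state $s_j+\delta$ can still reach $\SymbState'$ by letting time pass, giving some $\delta''\ge\delta$ with $s_j+\delta''\in\SymbState'$, so $(\delta'',\trans_c)\in\dec{i^{\ast},s_j}$ with $delay(\delta'',\trans_c)=\delta''\ge\delta$ meets the interception clause. The urgent sub-case ($\SymbState_1\subseteq\SymbState_2$ with $\SymbState_2$ having no upper temporal constraint, so $\SymbState_2=\SymbState'$) is the same argument with $\delta=0$, using $s_j\in\SymbState_1\subseteq\SymbState'$. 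Ranging over all $j$ then yields that $\pi(\run)$ is coherent with $\strat_\PartList$, the only possible premature stop being at a mirror location offering no matching instruction --- precisely where a controller invariant blocks $\run$.
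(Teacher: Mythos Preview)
Your proof is correct and follows essentially the same approach as the paper: both arguments trace the run from a mirror location into an instruction location, split on $\decision=\Wait$ versus $\decision=\trans_c$, and in the latter case use $\SymbState_2\subseteq\SymbState_2'$ together with the invariant $\TempPred{\SymbState_2}$ (or urgency) to witness coherence for the controllable and uncontrollable sub-cases. Your explicit projection onto $\game$ and appeal to memorylessness in lieu of induction are presentational choices; your construction of the witnessing decision $(\delta'',\trans_c)$ with $\delta''\geq\delta$ in the uncontrollable case is in fact more explicit than the paper's corresponding one-line justification.
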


\begin{proof}

    Recall that $\run$ is a run in $\ParallelComp{\cont}{\game}$
    with parameter value $\val$. Assume that $\hist$ is a history of $\run$ that is so far coherent with strategy $\PartList$. We
    will show that the next step (a transition or termination)
    is still coherent with $\PartList$. Assume that $\hist$ ended
    in $ls(\hist) = (\loc^{\cont}, \loc^{\game}, \val)$, where
    $\loc^{\cont}$ is the mirror location of $\loc^{\game}$.

    If there is no instruction matching $\state = (\loc^{\game}, \val)$ after processing by \textsc{AddEpsilonBounds}, then there is no transition available from $\loc^{\cont}$ in the controller. The run $\run$ violates the urgent nature of $\loc^{\cont}$ and terminates correctly.
    
    Else, there exists an internal controller transition to an instruction location $q_i$ tied to some instruction $\instr = (\SymbState_1, \SymbState_2 , \decision )$  matching $\state$ after processing by \textsc{AddEpsilonBounds}.
    
    If $\decision = \Wait$, no controllable transition is available from $q_i$. This
    corresponds to a controller decision $\strat(\hist) = \infty$ which is coherent with $\PartList$. The run $\run$ either stays in $q_i$ forever or takes an uncontrollable transition $\trans_u$ after some delay $\delay \leq \infty$.

    Else, $\decision$ is a controllable action $\trans_c$ from $\TransSet_c$.
    The instruction $\instr$ has been obtained from an instruction $\instr' =  (\SymbState_1', \SymbState_2' , \decision' )$ through processing by \textsc{AddEpsilonBounds} either directly if $\SymbState_2'$ has an upper temporal constraint or from splitting $\instr'$. In any case, we have $\SymbState_1 \subseteq \SymbState_1'$ and $\SymbState_2 \subseteq \SymbState_2'$.

    Furthermore, either $q_i$ is urgent or its invariant is $\TempPred{\SymbState_2}$, where $\SymbState_2 \subseteq \Inv(\loc^{\game})$ has an upper temporal constraint. By ending the run $\run$ here, we would invalidate a controller invariant before invalidating a PTG invariant.

    If $\run$ doesn't invalidate a controller invariant, then a discrete transition occurs.
    The next transition cannot be an internal transition of the controller, since there
    are no internal transitions leaving $q_i$. 
    All transitions of the PTG have a label synchronized with the controller, so internal transitions of the PTG cannot occur either.
    Therefore, either a controllable parallel discrete transition or an uncontrollable parallel discrete transition occurs.

    In the case of a controllable parallel discrete transition, there is only one action synchronized with a controllable label available in $q_i$: a transition with the label of $\trans_c$ from $\SymbState_2$. Since no two transitions available in the same state of the PTG share a label, we are ensured that the only controllable transition available in $q_i$ is $\trans_c$.
    We have that from a state of $\SymbState_1 \subseteq \SymbState_1'$, we take the transition $\trans_c$ after a delay $\delay$ leading to $\SymbState_2 \subseteq \SymbState_2'$. This is coherent with the strategy described by the original instruction list.

    Else, an uncontrollable parallel discrete transition is taken next. It is taken from the invariant of $q_i$.
    Either $q_i$ is urgent and the transition is taken immediately or its invariant is $\TempPred{\SymbState_2}$. In any case, the uncontrollable action is taken within a delay smaller or equal to a delay leading to $\SymbState_2 \subseteq \SymbState_2'$. This is coherent with the strategy described by the original instruction list.
    
    Finally, the next parallel discrete transition is coherent with the strategy specification $\PartList$ and leads to a mirror location in the controller.

\end{proof}

\begin{lemma}
    For all states $(\loc^{\cont}, \loc^{\game}, \val)$ of $\run$ such that an uncontrollable transition $\trans_u$ is available in the game, the controller cannot apply timed transition (with strictly positive delay) or parallel discrete transition before enabling $\trans_u$.
\end{lemma}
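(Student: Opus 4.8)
The plan is to reduce the claim to the structure of the controller produced by \Cref{alg:ContSynt}, proceeding by a case analysis on the controller location $\loc^{\cont}$. By the first lemma of this section, $\loc^{\cont}$ is either the mirror location $q_{\loc^{\game}}$ or an instruction location created from some instruction $(\SymbState_1, \_, \_)$ with $\loc^{\game} = \SymbState_1.\loc$. Throughout I will use two facts from the construction: $\LabelSet^{\cont} = \LabelSet^{\game}$ (set on the first line of \textsc{ControllerSynthesis}), so every game transition is synchronised; hence for an uncontrollable game transition $\trans_u$ with label $a$ to fire in $\ParallelComp{\cont}{\game}$, the controller must offer an enabled transition labelled $a$, and ``enabling $\trans_u$ in the composition'' means exactly that the controller offers such a transition in the current state.

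First I would handle the instruction-location case. When $\loc^{\cont}$ is an instruction location tied to $(\SymbState_1, \_, \_)$, \Cref{alg:ContSynt} (\crefrange{alg:line:synth:uncont}{alg:line:synth:uncontt}) adds, for \emph{every} uncontrollable transition $\trans_u = (\SymbState_1.\loc, \_, a, \_, \loc')$ out of $\SymbState_1.\loc = \loc^{\game}$, a controller transition to $q_{\loc'}$ with guard $\top$ and label $a$. Since this guard is trivially satisfied by $\val$ and $\trans_u$ is available in the game by hypothesis (its guard holds at $\val$), the parallel discrete transition labelled $a$ is already enabled at $(\loc^{\cont}, \loc^{\game}, \val)$. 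So $\trans_u$ is enabled in the current state; in particular, no positive-delay timed step and no parallel discrete step can precede it.

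Next I would handle the mirror-location case. Every mirror location is created \emph{urgent} (\cref{alg:line:synth:urgent1} and the analogous target/successor-location creations), so no timed transition with strictly positive delay is available from $(\loc^{\cont}, \loc^{\game}, \val)$. Moreover, the only transitions leaving a mirror location $q_{\loc^{\game}}$ are the internal controller transitions to instruction locations added at \cref{alg:line:synth:instrt}, which carry no game label; hence no parallel discrete transition is available from $q_{\loc^{\game}}$ either. Therefore the only step the composition can take from this state is an internal controller transition, which has zero delay and no clock reset and leads to a state whose controller location is an instruction location of a matching instruction $(\SymbState_1, \_, \_)$ with $\SymbState_1.\loc = \loc^{\game}$, the valuation still being $\val$; the instruction-location case then applies and $\trans_u$ is enabled there. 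If no instruction matches $(\loc^{\game}, \val)$, no such internal transition exists, the run ends here by urgency, and the claim holds vacuously. In either subcase, no positive-delay timed step or parallel discrete step occurs before $\trans_u$ is enabled.

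The only delicate point is the mirror-location analysis: one must check that the transition added at \cref{alg:line:synth:instrt} is genuinely internal to the controller (its freshly introduced label is not shared with the game), so that the urgency of the mirror location really does prevent any game step before the internal move, and that mirror locations acquire no game-labelled outgoing transition anywhere else in \Cref{alg:ContSynt}. Granting these structural invariants of the construction, the lemma follows by the direct case split above.
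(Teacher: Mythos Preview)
Your proposal is correct and follows essentially the same approach as the paper: the same case split on whether $\loc^{\cont}$ is a mirror location (urgent, only internal outgoing transitions) or an instruction location (where the construction at \crefrange{alg:line:synth:uncont}{alg:line:synth:uncontt} provides a $\top$-guarded synchronised copy of every uncontrollable transition, so $\trans_u$ is already enabled). Your treatment is somewhat more detailed—explicitly following the internal move from the mirror location to the instruction location and flagging the freshness of the internal label—but the argument is the same.
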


\begin{proof}
    If $\loc^{\cont}$ is a mirror location, it is urgent and only has internal outgoing transitions leading to an instruction location.
    If $\loc^{\cont}$ is an instruction location $q_i$ with $\instr = (\SymbState_1, \_ , \_ ) $, then  $\loc^{\game} = \SymbState_1.l$. From there, $\trans_u$ is an uncontrollable transition from $\loc^{\game} = \SymbState_1.l$ and thus has a synchronized transition in the controller from $\loc^{\cont} = q_i$ with  guard $\top$. Thus, $\trans_u$ is enabled.
\end{proof}

Using the previous Lemmas, we can now show the correctness of the generated controller:

\correctnessController*

\begin{proof}
    For all winning states $\state$, the strategy specification $\PartList$ generated
    by \Cref{alg:main} has an instruction matched by $\state$. Furthermore, the strategy
    $\strat$ associated with $\PartList$ is winning and keeps us in winning states until the target is reached. Let $\cont$ be the controller generated by \cref{alg:ContSynt} with $\PartList$ as input. All states of the game $\game$ of a run on the parallel composition $\ParallelComp{\cont}{\game}$ have a matching instruction in $\PartList$ until a target location is reached. Since $\strat$ is winning, all runs on $\ParallelComp{\cont}{\game}$ eventually reach a target location.
\end{proof}

\newpage
\section{Experimental Setup}
\label{app:exp_setup}
All experiments were run on a single core of a computer with an Intel Core i5-1135G7 CPU @ 2.40GHz with 16 GB of RAM running Ubuntu 22.04.5 LTS. \label{app:exp_setup}
In order to measure the additional computation time incurred by constructing the winning strategy, two versions of the algorithm are run for each plate: one that purely does parameter synthesis via the original algorithm and our new version that does parameter synthesis and strategy synthesis as well as generates a controller. The generated controller is verified as described in~\cref{sec:verification}, re-using the algorithm with strategy synthesis turned off.

We run the experiments 5 times and report the average time and symbolic state space size. A timeout of 2 hours is used.

\section{\imitator Model of the Production Cell}
An \imitator model of the 1-plate production cell can be seen in~\cref{fig:model_plate,fig:model_robot,fig:model_broadcaster}. The model is composed of three automata: the plate, the robot and the broadcaster. Arrows with same color signify a synchronized action in \imitator.
\label{app:prodcell}
\begin{figure}[h!]
  \centering
  \includegraphics[width=.61\textwidth]{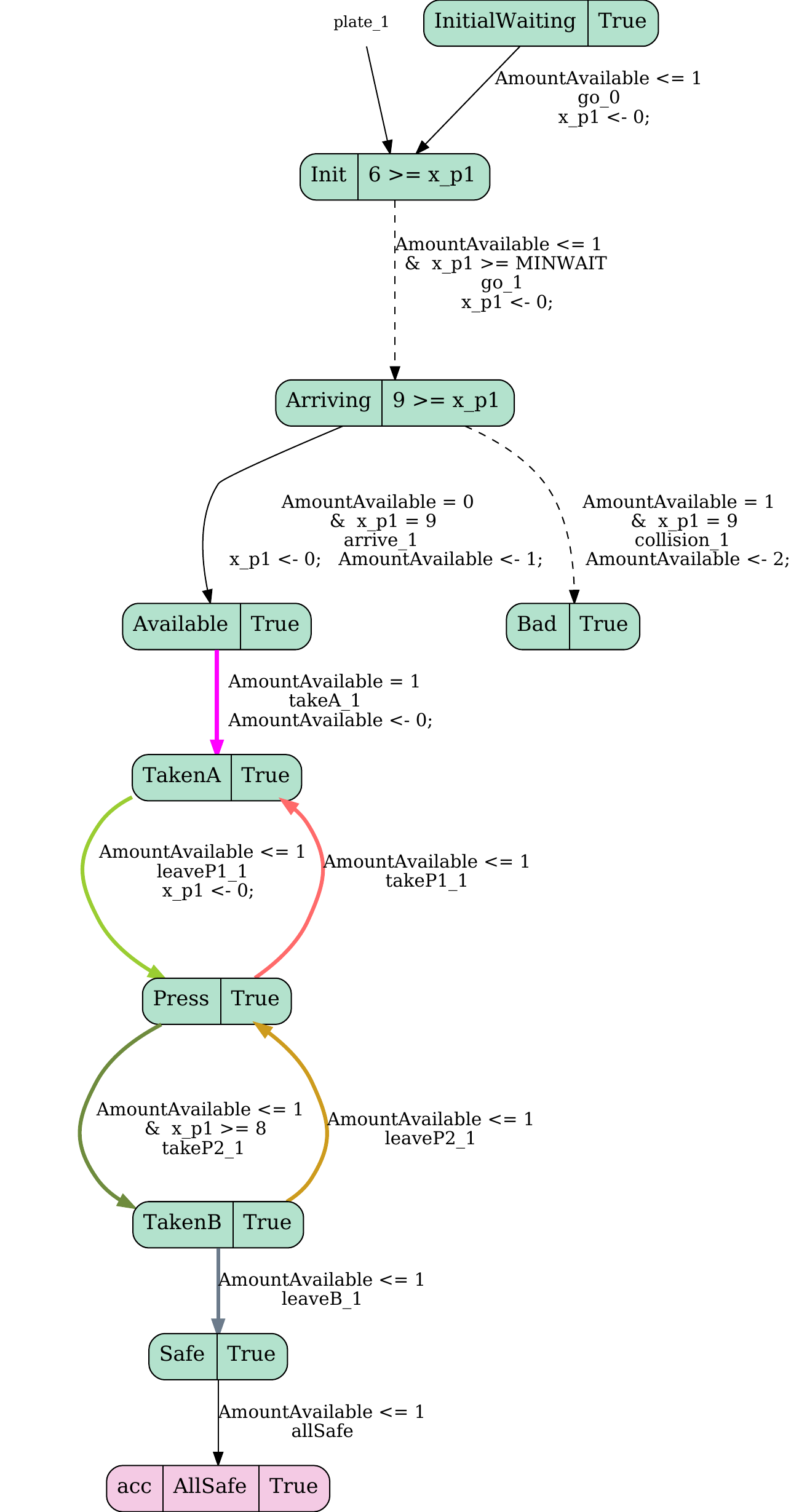}
  \caption{\imitator model for the Production Cell -- plate component}
  \label{fig:model_plate}
\end{figure}

\begin{figure}[h!]
  \hspace{-2.5cm}
  \includegraphics[width=1.2\textwidth]{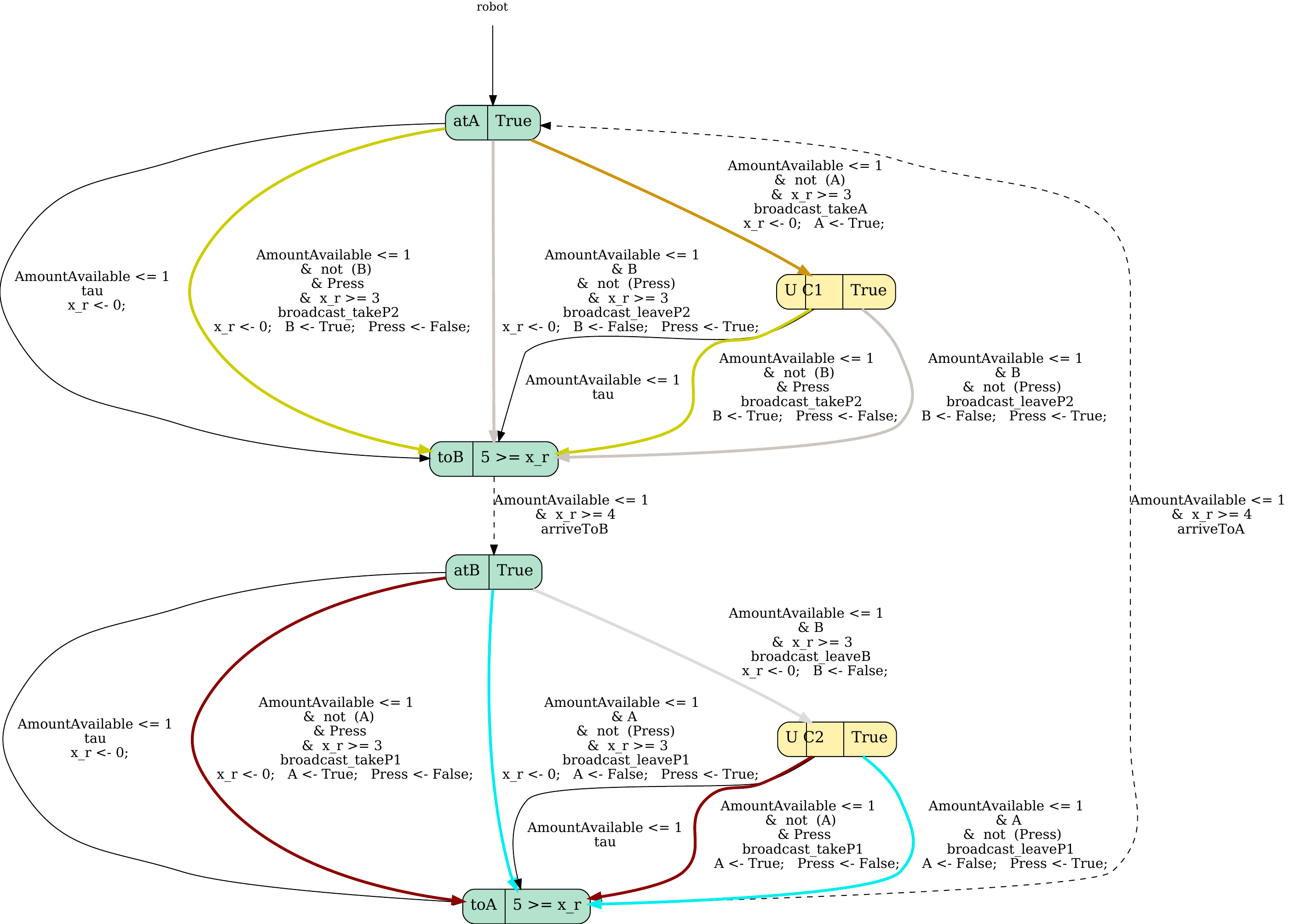}

  \caption{\imitator model for the Production Cell -- robot component}
  \label{fig:model_robot}
\end{figure}

\begin{figure}[h!]
  \hspace{-2.5cm}
  \includegraphics[width=1.5\textwidth]{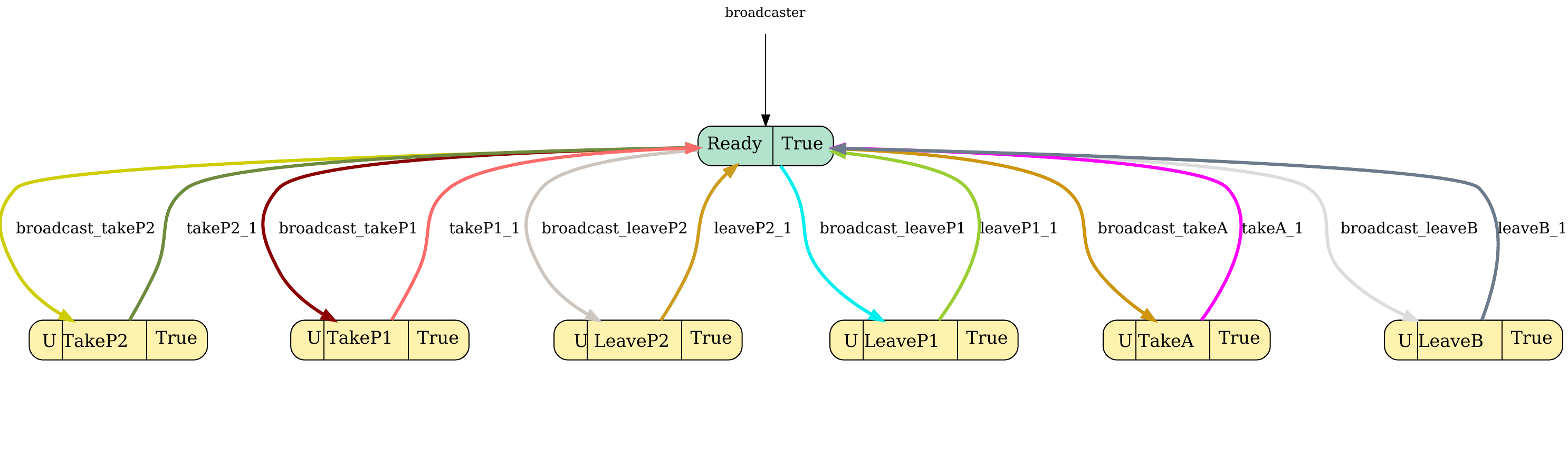}

  \caption{\imitator model for the Production Cell -- broadcaster component\\ (auxiliary automaton to facilitate broadcast communication)}
  \label{fig:model_broadcaster}
\end{figure}

\end{document}